\newtheorem{defin}{Definition}
\newtheorem{theo}[defin]{Theorem}
  \newenvironment{theorem}{\begin{theo} \sl}{\end{theo}}
\newtheorem{lem}[defin]{Lemma}
  \newenvironment{lemma}{\begin{lem} \sl}{\end{lem}}
\newtheorem{coro}[defin]{Corollary}
  \newenvironment{corollary}{\begin{coro} \sl}{\end{coro}}
\newenvironment{proof}{\emph{Proof.}}{\hfill $\Box$ \medskip\\}
\newcommand{\etal}{\emph{et~al.}\xspace}
\newcommand{\from}[1]{{\emph{\textbf{(#1)}}}}
\journal{Computational Geometry: Theory and Applications}
\begin{document}

\begin{frontmatter}

%% Title, authors and addresses

%% use the tnoteref command within \title for footnotes;
%% use the tnotetext command for the associated footnote;
%% use the fnref command within \author or \address for footnotes;
%% use the fntext command for the associated footnote;
%% use the corref command within \author for corresponding author footnotes;
%% use the cortext command for the associated footnote;
%% use the ead command for the email address,
%% and the form \ead[url] for the home page:
%%
%% \title{Title\tnoteref{label1}}
%% \tnotetext[label1]{}
%% \author{Name\corref{cor1}\fnref{label2}}
%% \ead{email address}
%% \ead[url]{home page}
%% \fntext[label2]{}
%% \cortext[cor1]{}
%% \address{Address\fnref{label3}}
%% \fntext[label3]{}

\title{Making triangulations 4-connected using flips\tnoteref{thanks}}
\tnotetext[thanks]{This research was partially supported by NSERC, GraDR EUROGIGA project No. GIG/11/E023, MTM2009-07242, Gen. Cat. DGR 2009SGR1040 and ESF EUROCORES programme EuroGIGA, CRP ComPoSe: Fonds National de la Recherche Scientique (F.R.S.-FNRS) - EUROGIGA NR 13604.}
%% use optional labels to link authors explicitly to addresses:
%% \author[label1,label2]{<author name>}
%% \address[label1]{<address>}
%% \address[label2]{<address>}

\author[carleton]{Prosenjit Bose}
\ead{jit@scs.carleton.ca}

\author[carleton]{Dana Jansens}
\ead{dana@cg.scs.carleton.ca}

\author[carleton]{Andr\'e van Renssen}
\ead{andre@cg.scs.carleton.ca}

\author[maria]{Maria Saumell}
\ead{maria.saumell.m@gmail.com}

\author[carleton]{Sander Verdonschot}
\ead{sander@cg.scs.carleton.ca}

\address[carleton]{School of Computer Science, Carleton University, 5302 Herzberg Laboratories,\\
1125 Colonel By Drive, Ottawa, Ontario, K1S 5B6, Canada.}
\address[maria]{Computer Science Department, Universit\'e Libre de Bruxelles,\\
Boulevard du Triomphe - CP 212, 1050 Brussels, Belgium.}

\begin{abstract}
%% Text of abstract
We show that any combinatorial triangulation on $n$ vertices can be transformed into a 4-connected one using at most $\lfloor(3n - 9)/5\rfloor$ edge flips. We also give an example of an infinite family of triangulations that requires this many flips to be made 4-connected, showing that our bound is tight. In addition, for $n \geq 19$, we improve the upper bound on the number of flips required to transform any 4-connected triangulation into the canonical triangulation (the triangulation with two dominant vertices), matching the known lower bound of $2n - 15$. Our results imply a new upper bound on the diameter of the flip graph of $5.2 n - 33.6$, improving on the previous best known bound of $6n - 30$.
\end{abstract}

\begin{keyword}
%% keywords here, in the form: keyword \sep keyword
diagonal flip \sep flip graph \sep triangulation \sep 4-connected triangulation \sep Hamiltonian triangulation
%% MSC codes here, in the form: \MSC code \sep code
%% or \MSC[2008] code \sep code (2000 is the default)

\end{keyword}

\end{frontmatter}

%%
%% Start line numbering here if you want
%%
% \linenumbers

%% main text
\section{Introduction}
\label{sec:introduction}

\noindent Given a triangulation (a maximal planar simple graph) on a set of $n$ vertices, we define an \emph{edge flip} as removing an edge $(a,b)$ from the graph and replacing it with the edge $(c, d)$, where $c$ and $d$ are the other vertices of the two triangular faces that had $(a, b)$ as an edge. Flipping an edge is allowed if and only if it does not create a duplicate edge. Figure~\ref{fig:flip} shows an example.

\begin{figure}[b]
 \centering
 \includegraphics{./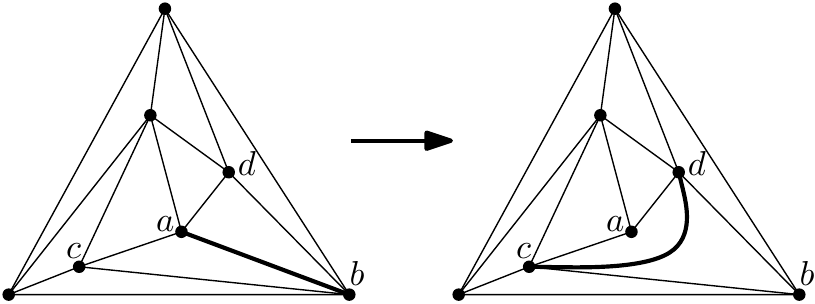}
 \caption{An example triangulation before and after flipping edge $(a, b)$.}
 \label{fig:flip}
\end{figure}

Flips have been studied mostly in two different settings: the \emph{geometric} setting, where we are given a fixed set of points in the plane and edges are straight line segments, and the \emph{combinatorial} setting, where we are only given the clockwise order of edges around each vertex (a combinatorial embedding). In this paper, we concern ourselves with the number of flips required to transform one triangulation into another in the combinatorial setting. We give a brief overview of previous work on this problem. A more detailed overview, including full proofs of the previous bounds, can be found in a recent survey by Bose and Verdonschot~\cite{bose2012history}. For a broader overview on the topic of flips, including applications and related work, we refer the reader to a survey by Bose and Hurtado~\cite{bose2009flips}.

Given a set of $n$ vertices, we define its \emph{flip graph} as the graph with a vertex for each distinct triangulation and an edge between two vertices if their corresponding triangulations differ by a single flip. Two triangulations are considered distinct if they are not isomorphic. In his seminal paper, Wagner~\cite{wagner1936bemerkungenzum} showed that there always exists a sequence of $O(n^2)$ flips that transforms a given triangulation into any other triangulation on the same number of vertices. In terms of the flip graph, Wagner showed that it is connected and has diameter $O(n^2)$. Recently, efforts have been made to find better bounds on the diameter of the flip graph. Komuro~\cite{komuro1997diagonal} was the first to show that the diameter is linear and Mori~\etal~\cite{mori2003diagonal} currently have the strongest upper bound of $6n - 30$.

These results all show how to transform any triangulation into a fixed \emph{canonical} triangulation, which is the  triangulation with two dominant vertices (adjacent to every other vertex). Transformation of one triangulation into another is then straightforward. We transform the first triangulation into the canonical one and transform it into the second triangulation by reversing the sequence of flips that transforms the second triangulation into the canonical one. Mori~\etal's algorithm to transform a triangulation into the canonical one consists of two steps. They first make the given triangulation 4-connected using at most $n - 4$ flips. Since a 4-connected triangulation is always Hamiltonian~\cite{whitney1931theorem}, they then show how to transform any Hamiltonian triangulation into the canonical one by at most $2n - 11$ flips, using a decomposition into two outerplanar graphs that share a Hamiltonian cycle as their outer faces.

The problem of making triangulations 4-connected has also been studied in the setting where many edges may be flipped simultaneously. Bose~\etal~\cite{bose2006simultaneous} showed that any triangulation can be made 4-connected by one simultaneous flip and that $O(\log n)$ simultaneous flips are sufficient and sometimes necessary to transform between two given triangulations.

In Section~\ref{sec:ub}, we show that any triangulation can be made 4-connected using at most $\lfloor(3n - 9)/5\rfloor$ flips. This improves the first step of the construction by Mori~\etal For $n \geq 19$, we also improve the bound on the second step of their algorithm to match the lower bound by Komuro~\cite{komuro1997diagonal}. This results in a new upper bound on the diameter of the flip graph of $5.2 n - 33.6$. We then show in Section~\ref{sec:lb} that, when $n$ is a multiple of 5, there are triangulations that require $(3n - 10)/5 = \lfloor(3n - 9)/5\rfloor$ flips to be made 4-connected, showing that our bound is tight. Section~\ref{sec:lemmas} contains proofs for various technical lemmas that are used in the proof of the upper bound. Section~\ref{sec:conclusions} contains a discussion of our results and some remaining open problems.

\section{Upper bound}
\label{sec:ub}

\noindent In this section we prove an upper bound on the number of flips required to make any given triangulation 4-connected. Specifically, we show that \mbox{$\lfloor(3n - 9)/5\rfloor$} flips always suffice. The proof references several technical lemmas whose proofs can be found in Section~\ref{sec:lemmas}. We also prove that any 4-connected triangulation can be transformed into the canonical form using a worst-case optimal number of $2n - 15$ flips.

% Definitions
We are given a triangulation $T$, along with a combinatorial embedding specifying the clockwise order of edges around each vertex of $T$. In addition, one of the faces of $T$ is marked as the \emph{outer face}. If an edge of the outer face is flipped, one of the two new faces is designated as the new outer face. A \emph{separating triangle} $D$ is a cycle in $T$ of length three whose removal splits $T$ into two (non-empty) connected components. We call the component that contains vertices of the outer face the \emph{exterior} of $D$, and the other component the \emph{interior} of $D$. A vertex in the interior of $D$ is said to be \emph{inside} $D$ and likewise, a vertex in the exterior of $D$ is \emph{outside} $D$. An edge is inside a separating triangle if at least one of its endpoints is inside. A separating triangle $A$ \emph{contains} another separating triangle $B$ if and only if the interior of $B$ is a subgraph of the interior of $A$ with a strictly smaller vertex set. If $A$ contains $B$, $A$ is called the \emph{containing} triangle. A separating triangle that is contained by the largest number of separating triangles in $T$ is called \emph{deepest}. Since containment is transitive, a deepest separating triangle cannot contain any separating triangles, as these would have a higher number of containing triangles. Finally, we call an edge that does not belong to any separating triangle a \emph{free edge}. Free edges have the following useful property.

\begin{lemma}
 \label{lem:freeedge}
 In a triangulation, every vertex $v$ of a separating triangle $D$ is incident to at least one free edge inside $D$.
\end{lemma}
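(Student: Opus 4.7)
My plan is to pick a sufficiently ``innermost'' separating triangle containing $v$ and show that any edge from $v$ into its interior must be free. Let $\mathcal{S}$ be the family of separating triangles $S'$ that have $v$ as a vertex and satisfy either $S' = D$ or $S'$ is contained in $D$. This family is non-empty because $D \in \mathcal{S}$, so I can choose $S \in \mathcal{S}$ whose interior has the fewest vertices.

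Next I will verify that $v$ has at least one neighbor strictly inside $S$. Let $vy$ and $vz$ be the two edges of $S$ incident to $v$. In the combinatorial embedding these edges split the cyclic order around $v$ into two runs, and by planarity the edges in one run lead into the interior of $S$ while those in the other lead into the exterior. If the interior run were empty, then $vy$ and $vz$ would be consecutive at $v$, and the face between them on the interior side would be bounded by $vy$, $vz$, and $yz$, i.e.\ by the triangle $S$ itself, contradicting the assumption that $S$ is separating. Hence $v$ has a neighbor $w$ strictly inside $S$, and the edge $vw$ lies inside $S$ and therefore inside $D$.

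The heart of the argument is then to show that $vw$ is free. Assume for contradiction that $vw$ belongs to some separating triangle $T = (v, w, u)$. The vertex $u$ cannot lie strictly outside $S$, since the edge $wu$ would then have to cross the cycle $S$, violating planarity. Hence $u$ lies on $S$ or strictly inside $S$, so the entire cycle $T$ sits in the closed region bounded by $S$. A short Jordan-curve argument, using that the outer face of the triangulation lies outside $S$ and therefore on the unbounded side of $T$, shows that the interior of $T$ is a subset of the interior of $S$; the containment is strict because $w$ lies in the interior of $S$ but is a vertex of $T$. Thus $T$ is a separating triangle through $v$ properly contained in $S$, and hence also contained in $D$, so $T \in \mathcal{S}$ with strictly fewer interior vertices than $S$, contradicting the minimality of $S$.

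The main obstacle I anticipate is the planarity bookkeeping in the last step: carefully checking that the bounded side of $T$ really does sit inside $S$, in particular when $u \in \{y,z\}$, so that one edge of $T$ coincides with an edge of $S$ and the two cycles overlap along an edge. Once that is settled the contradiction is immediate, and $vw$ is the required free edge incident to $v$ and inside $D$.
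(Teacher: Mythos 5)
Your proof is correct and takes essentially the same approach as the paper: both descend to an innermost separating triangle through $v$ and argue that an edge from $v$ to an interior neighbour must be free. The paper phrases this descent as an induction on the number of separating triangles contained in $D$, while you phrase it as an extremal choice (fewest interior vertices) over the separating triangles through $v$ contained in $D$, but the underlying idea and the Jordan-curve / containment reasoning (Lemmas~\ref{lem:interior} and~\ref{lem:interiorvertex} in the paper) are the same.
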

\begin{proof}
 Consider one of the edges of $D$ incident to $v$. Since $D$ is separating, its interior cannot be empty and since $D$ is part of a triangulation, there is a triangular face inside $D$ that uses this edge. Now consider the other edge $e$ of this face that is incident to $v$.

 The remainder of the proof is by induction on the number of separating triangles contained in $D$. For the base case, assume that $D$ does not contain any other separating triangles. Then $e$ must be a free edge and we are done.

 For the induction step, there are two further cases. If $e$ does not belong to a separating triangle, we are again done, so assume that $e$ belongs to a separating triangle $D'$. Since $D'$ is itself a separating triangle contained in $D$ and containment is transitive, the number of separating triangles contained in $D'$ must be strictly smaller than the number contained in $D$. Since $v$ is also a vertex of $D'$, our induction hypothesis tells us that there is a free edge incident to $v$ inside $D'$. Since $D'$ is contained in $D$, this edge is also inside $D$.
\end{proof}

% Proof.
We will remove all separating triangles from $T$ by repeatedly flipping an edge of a separating triangle. This makes $T$ 4-connected, as a triangulation is 4-connected if and only if it has no separating triangles. This technique was also used by Mori~\etal~\cite{mori2003diagonal}, who proved the following lemma.

\begin{lemma}
 \label{lem:flip}
 \from{Mori~\etal~\cite{mori2003diagonal}}
 In a triangulation on $n \geq 6$ vertices, flipping any edge of a separating triangle $D$ will remove that separating triangle. This never introduces a new separating triangle, provided that the selected edge belongs to multiple separating triangles or none of the edges of $D$ belong to multiple separating triangles.
\end{lemma}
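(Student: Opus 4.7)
The first claim is immediate: after flipping an edge $(a,b)$ of $D=abc$, the edge $(a,b)$ is gone, so the $3$-cycle $abc$ no longer exists and $D$ ceases to be a separating triangle. The second claim requires more care, and my plan is to first classify precisely which new separating triangle the flip could possibly create, then rule it out under each of the two hypotheses. Let $p$ and $q$ be the third vertices of the two faces incident to $(a,b)$, with $p$ inside $D$ and $q$ outside. The flip replaces $(a,b)$ by $(p,q)$ and the faces $\triangle abp,\triangle abq$ by $\triangle apq,\triangle bpq$. Every $3$-cycle not using the new edge $(p,q)$ already existed before the flip with the same status (facial or separating), so any \emph{new} separating triangle has the form $\triangle pqx$ with $x\notin\{a,b\}$. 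A common neighbour of $p$ (in the interior of $D$) and $q$ (in the exterior of $D$) must lie on the cycle $D$ itself, otherwise the edge from $x$ to whichever of $p,q$ lies on the opposite side of $D$ would cross $D$ in the planar embedding; hence $x=c$. So the only candidate is $\triangle pqc$, and it actually appears precisely when both $(p,c)$ and $(q,c)$ were edges before the flip.

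Suppose $(a,b)$ belongs to another separating triangle $D'=abz$ with $z\neq c$. By symmetry, take $z$ on the same side of $(a,b)$ as $p$, so that $D'\subsetneq D$. Using the cyclic order of neighbours at $a$ and $b$ between $(a,b)$ and $z$, I would argue that the face $\triangle abp$ lies in the bounded region of $D'$, so $p$ is in the interior of $D'$; on the other hand $c$ is a vertex of $D$ lying past $z$ along $D$, hence in the exterior of $D'$. A planar edge $(p,c)$ would therefore have to cross the cycle $D'$, which is impossible, so $(p,c)$ cannot exist. If instead $z$ is on the $q$-side, the symmetric argument rules out $(q,c)$. Either way $\triangle pqc$ fails to close up into a $3$-cycle and no new separating triangle appears.

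Suppose instead that no edge of $D$ lies in multiple separating triangles, and assume for contradiction that both $(p,c)$ and $(q,c)$ are edges. Then $\triangle acp,\triangle bcp,\triangle acq,\triangle bcq$ are all $3$-cycles of the pre-flip triangulation. Since $(a,c)$ and $(b,c)$ each belong only to the separating triangle $D$, none of these $3$-cycles is separating, so all four must be faces. Combining that $\triangle abp,\triangle acp,\triangle bcp$ are faces, the link of $p$ is the $3$-cycle $abc$, giving $\deg(p)=3$ with $N(p)=\{a,b,c\}$; symmetrically $\deg(q)=3$ with $N(q)=\{a,b,c\}$. But then the triangulation has only the five vertices $\{a,b,c,p,q\}$, contradicting $n\ge 6$.

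The main obstacle is the topological bookkeeping in the first case: one has to check carefully, from the cyclic order of edges around $a$ and $b$ between $(a,b)$ and $z$, that a sub-separating triangle $D'=abz$ on the $p$-side really encloses $p$ while excluding $c$, so that the Jordan-curve argument forbidding $(p,c)$ is rigorous. Once this is in hand, the structural argument in the second case is routine and the classification of the unique candidate $\triangle pqc$ reduces the whole lemma to these two planarity arguments.
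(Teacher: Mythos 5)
The paper attributes this lemma to Mori~\etal~\cite{mori2003diagonal} and does not restate their proof, so there is no in-paper argument for me to compare yours against; I can only assess your proof on its own terms, and it is correct. The decisive reduction is a good one: any new separating triangle must use the flipped-in edge $(p,q)$, and since $p$ lies strictly inside $D$ and $q$ strictly outside, any common neighbour of $p$ and $q$ other than $a,b$ is forced to lie on $D$, leaving $pqc$ as the unique candidate. The Jordan-curve bookkeeping you flagged as the obstacle does go through: if $D'=abz$ with $z$ strictly inside $D$, then $D'$ is nested in $D$ (the outer face lies in the region of $abz$ containing the exterior of $D$), the face $abp$ is the face adjacent to $(a,b)$ on the interior side of $D'$, so $p$ lies in the interior of $D'$ (using $p\neq z$, as $abp$ is a face while $abz$ is separating), while $c$, lying on $D$ and distinct from $a,b,z$, lies in the exterior of $D'$; an edge $(p,c)$ would then connect the interior and exterior of the separating triangle $D'$, which is impossible. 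The $z$-outside subcase is handled symmetrically via $(q,c)$. In the no-shared-edge case, your observation that $acp$, $bcp$, $acq$, $bcq$ must all be faces (else $(a,c)$ or $(b,c)$ would lie in a second separating triangle) pins down the links of $p$ and $q$ and forces the five-vertex double pyramid, which is exactly where the hypothesis $n\geq 6$ is needed. This is a complete and tidy proof of the cited lemma.
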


\noindent With this in mind, our algorithm works as follows.
\\[0.6\baselineskip]
\noindent\textbf{Algorithm} \textsc{Make 4-connected}
\vspace{-0.6\baselineskip}
\begin{itemize}
 \renewcommand{\labelitemii}{$\circ$}
 \setlength{\itemsep}{-0.6\baselineskip}
 \item Find a deepest separating triangle $D$, preferring ones that do not use an edge of the outer face.
 \vspace{-0.6\baselineskip}
 \begin{itemize}
 \setlength{\itemsep}{-4pt}
 \item If $D$ does not share any edge with other separating triangles, flip an edge of $D$ that is not on the outer face.
 \item If $D$ shares exactly one edge with another separating triangle, flip this edge.
 \item If $D$ shares multiple edges with other separating triangles, flip one of the shared edges that is not shared with a containing triangle.
 \end{itemize}
 \item Repeat until $T$ is 4-connected.
\end{itemize}

\noindent We are now ready prove that this algorithm achieves our new upper bound.

\begin{theorem}
 \label{thm:4-connected}
 A triangulation on $n \geq 6$ vertices can be made 4-connected using at most $\lfloor(3n - 9)/5\rfloor$ flips.
\end{theorem}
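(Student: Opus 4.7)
My plan is an amortized charging analysis of the algorithm \textsc{Make 4-connected}. I would assign a budget of $3$ units of charge to each of the $n-3$ vertices that do not lie on the outer face, for a total budget of $3n - 9$ units, and then argue that every flip the algorithm performs consumes at least $5$ units of charge. This would immediately yield the desired bound of $\lfloor(3n-9)/5\rfloor$ flips. Before starting the analysis I would verify that the algorithm is well defined: Lemma~\ref{lem:freeedge} guarantees that in the isolated case there is always a free (hence non-outer) edge inside $D$ available to flip, and Lemma~\ref{lem:flip} ensures that every flip destroys the target separating triangle without introducing new ones, so the algorithm terminates with a 4-connected triangulation.

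The heart of the proof is the case-by-case charging argument dictated by the three branches of the algorithm. When $D$ is isolated from other separating triangles, a single flip removes only $D$; I would charge the flip to the three boundary vertices of $D$ together with interior vertices inside $D$, using the fact that, since $D$ is deepest, its interior contains vertices that no other separating triangle can later claim. When $D$ shares exactly one edge with another separating triangle $D'$, a single flip removes both $D$ and $D'$; the pair has four distinct boundary vertices plus interior vertices on both sides of the shared edge, comfortably covering the $5$ units. In the remaining case, $D$ shares edges with multiple separating triangles; the algorithm flips a shared edge that is not on a containing triangle, again destroying at least two separating triangles per flip and freeing enough vertices for the charging.

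The main obstacle I anticipate is ruling out \emph{double counting}: a vertex on the boundary of $D$ may also lie on other separating triangles that the algorithm will process later, so one has to argue that spending charge on it now will not prevent some subsequent flip from finding its own $5$ units. This is where the deepest property of $D$ is essential, and where I expect the technical lemmas of Section~\ref{sec:lemmas} to do the real work: they presumably pinpoint exactly which vertices are ``vacated'' by the current flip (in the sense of no longer lying on any separating triangle), so that the $5$ units charged per flip are drawn from vertices that cannot be recharged later. The constant $-9$ in the bound arises naturally from excluding the three outer-face vertices from the budget, and the floor in $\lfloor(3n-9)/5\rfloor$ from the fact that flips are integers. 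A minor subtlety is the hypothesis $n \geq 6$, which guarantees the structural configurations assumed in Lemma~\ref{lem:flip} and ensures that each isolated deepest separating triangle has an interior vertex, so the small-$n$ boundary cases should be checked directly against the formula.
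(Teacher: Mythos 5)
Your plan diverges from the paper's in a structural way and, more importantly, stops short of the part that actually makes the argument work. The paper charges \emph{edges}, not vertices: it places one coin on each of the $3n-6$ edges, pays $5$ coins per flip, and then shows separately that the three edges of the outer face never get charged, which is where $3n-9$ comes from. Your vertex budget of $3(n-3)$ gives the same number but would require a completely different bookkeeping, and you never actually specify which vertices are charged for a given flip.

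The real content of the paper's proof is not the observation that each flip can be paid for once, but the maintenance of two explicit invariants across the whole run: (i)~every edge currently on a separating triangle has a coin, and (ii)~every vertex of a separating triangle has, \emph{inside} that triangle, an incident free edge with a coin. These invariants, together with a classification of chargeable edges into four types and a five-way case analysis on how the deepest triangle $D$ shares edges with other separating triangles, are exactly what rules out the double-counting you flag as the ``main obstacle.'' Your proposal names the obstacle and then writes that you ``expect the technical lemmas \ldots to do the real work'' of pinpointing which vertices are vacated --- but the lemmas (Lemmas~\ref{lem:onecontainingtriangle}--\ref{lem:outertriangle}) only supply local structural facts about containment; the global non-recharging guarantee comes from the invariants, which your scheme has no analogue of. In particular, your claim that in the isolated case ``interior vertices inside $D$'' can absorb charge because no other triangle can later claim them is not enough: after the flip, $D$ is gone and those vertices are not on any separating triangle, but the algorithm runs for many more iterations on triangles elsewhere, and you would need an invariant ensuring those triangles still find five unclaimed vertices each. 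Without such an invariant there is no proof, only a plausible outline.

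Two smaller errors: Lemma~\ref{lem:freeedge} is not used to show there is a flippable non-outer edge of $D$ --- in the isolated case the algorithm flips one of $D$'s \emph{own} edges, and since $D$ can share at most one edge with the outer face, a non-outer edge of $D$ always exists without appeal to that lemma; Lemma~\ref{lem:freeedge} is instead needed to establish the \emph{initial} validity of invariant~(ii). And in your third case (multiple shared edges), the flip does not necessarily ``destroy at least two separating triangles'' in a way that frees up enough budget, because $D$ may also share an edge with a \emph{containing} triangle $A$ that survives the flip and continues to claim two of $D$'s vertices --- this is precisely what Cases~4 and~5 of the paper handle with the Type~4 charging and Lemmas~\ref{lem:unsharedvertex} and~\ref{lem:onecontainingtriangle}.
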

\begin{proof}
 We prove this using a charging scheme. We begin by placing a coin on every edge of the triangulation. Then we flip the edges indicated by the algorithm until no separating triangles remain, while paying 5 coins for every flip. The exact charging scheme will be described later. During this process, we maintain two invariants:
 \begin{itemize}
  \item Every edge of a separating triangle has a coin.
  \item Every vertex of a separating triangle has an incident free edge that is inside the triangle and has a coin.
 \end{itemize}

 These invariants have several nice properties. First, an edge can either be a free edge or belong to a separating triangle, but not both. So at any given time, only one invariant applies to an edge. Second, an edge only needs one coin to satisfy the invariants, even if it is on multiple separating triangles or is a free edge for multiple separating triangles. These two properties imply that the invariants hold initially, since by Lemma~\ref{lem:freeedge}, every vertex of a separating triangle has an incident free edge.
 
 % Is there a nice general way to show that we only need to consider the invariant that this edge was used for before the flip?
 We now show that these invariants are sufficient to guarantee that we can pay 5 coins for every flip. Consider the situation after we flip an edge that belongs to a deepest triangle $D$ and satisfies the criteria of Lemma~\ref{lem:flip}, but before we remove any coins. Since flipping the edge has removed $D$ and no new separating triangles are introduced, both invariants still hold. We proceed by identifying four types of edges whose coins we can now remove to pay for this flip without upsetting the invariants.
 
  \smallskip
  \textbf{Type 1 (\includegraphics{./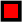}).} The flipped edge $e$. By Lemma~\ref{lem:flip}, $e$ cannot belong to any separating triangle after the flip, so the first invariant still holds if we remove $e$'s coin. Before the flip, $e$ was not a free edge, so the second invariant was satisfied even without $e$'s coin. Since the flip did not introduce any new separating triangles, this is still the case.

  \smallskip
  \textbf{Type 2 (\includegraphics{./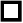}).} A non-flipped edge $e$ of $D$ that is not shared with any other separating triangle. By Lemma~\ref{lem:flip}, the flip removed $D$ and did not introduce any new separating triangles. Therefore $e$ cannot belong to any separating triangle, so the first invariant still holds if we remove $e$'s coin. By the same argument as for the previous type, $e$ is also not required to have a coin to satisfy the second invariant.

  \smallskip

%  \begin{wrapfigure}{r}{150pt}
%   \vspace{-0.5\baselineskip}
%   \centering
%   \includegraphics{./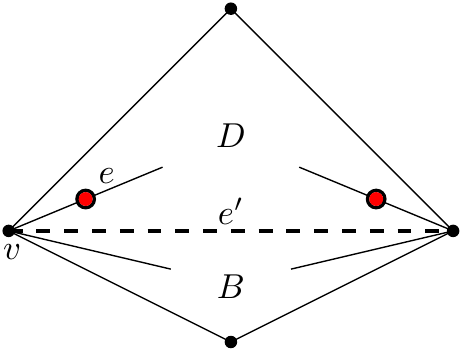}
%   \caption{Two type 4 edges.}
%   \label{fig:superfluous}
%   \vspace{-0.5\baselineskip}
%  \end{wrapfigure}

  \textbf{Type 3 (\includegraphics{./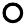}).} A free edge $e$ of a vertex of $D$ that is not shared with any containing separating triangle. Since $e$ did not belong to any separating triangle and the flip did not introduce any new ones, $e$ is not required to have a coin to satisfy the first invariant. Further, since the flip removed $D$ and $D$ was deepest, $e$ is not incident to a vertex of another separating triangle that contains it. Therefore it is no longer required to have a coin to satisfy the second invariant.

 \begin{figure}[ht]
  \centering
  \includegraphics{}
  \caption{Two type 4 edges.}
  \label{fig:superfluous}
 \end{figure}

  \smallskip
  \textbf{Type 4 (\includegraphics{./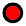}).} A free edge $e$ incident to a vertex $v$ of $D$, where $v$ is an endpoint of an edge $e'$ of $D$ that is shared with a non-containing separating triangle $B$, provided that we flip $e'$ (illustrated in Figure~\ref{fig:superfluous}). Any separating triangle that contains $D$ but not $B$ must share $e'$ (Lemma~\ref{lem:containingshared}) and is therefore removed by the flip.\\
So every separating triangle after the flip that contains $D$ also contains $B$. In particular, this also holds for containing triangles that share $v$. Since the second invariant requires only one free edge with a coin for each vertex of a separating triangle, we can safely charge the one inside $D$, as long as we do not charge the free edge in $B$.
 
 \medskip
 To decide which edges we charge for each flip, we distinguish five cases, based on the number of edges $D$ shares with other separating triangles and whether any of these triangles contain $D$. These cases are illustrated in Figures~\ref{fig:caseA}, \ref{fig:caseC}, and \ref{fig:caseB}.

 \begin{figure}[ht]
  \centering
  \includegraphics{./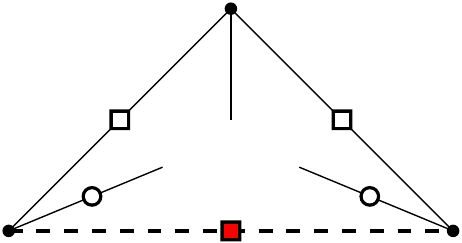}
  \caption{The edges that are charged if the deepest separating triangle does not share any edges with other separating triangles. The flipped edge is dashed and the charged edges are marked with filled boxes (Type~1), empty boxes (Type~2), empty disks (Type~3) or filled disks (Type~4).}
  \label{fig:caseA}
 \end{figure}

 \smallskip
  \textbf{Case 1.} $D$ does not share any edges with other separating triangles (Figure~\ref{fig:caseA}). In this case, we flip any of $D$'s edges. By the first invariant, each edge of $D$ has a coin. These edges all fall into Types 1 and 2, so we use their coins to pay for the flip. Further, $D$ can share at most one vertex with a containing triangle (Lemma~\ref{lem:onecontaining}), so we charge two free edges, each incident to one of the other two vertices (Type~3). 
\newpage
 \begin{figure}[ht]
  \centering
  \includegraphics{./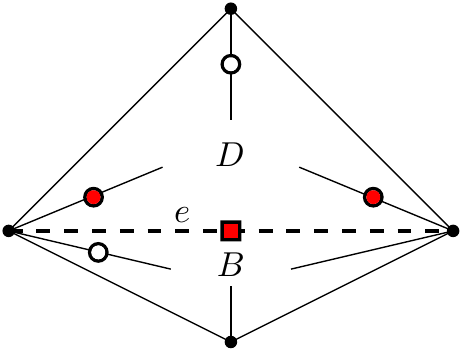}
  \caption{The edges that are charged if the deepest separating triangle only shares edges with non-containing separating triangles.}
  \label{fig:caseC}
 \end{figure}

  \textbf{Case 2.} $D$ does not share any edge with a containing triangle, but shares one or more edges with non-containing separating triangles (Figure~\ref{fig:caseC}). In this case, we flip one of the shared edges $e$. We charge $e$ (Type~1) and two free edges inside $D$ that are incident to the vertices of $e$ (Type~4). This leaves us with two more coins that we need to charge.

  Let $B$ be the non-containing separating triangle that shares $e$ with $D$. We first show that $B$ must have the same depth as $D$. There can be no separating triangles that contain $D$ but not $B$, as any such triangle would have to share $e$ (Lemma~\ref{lem:containingshared}) and $D$ does not share any edge with a containing triangle. Therefore any triangle that contains $D$ must contain $B$ as well. Since $D$ is contained in the maximal number of separating triangles, this holds for $B$ as well. This means that $B$ cannot contain any separating triangles and to satisfy the second invariant we only need to concern ourselves with triangles that contain both $B$ and $D$.

  Now consider the number of vertices of the quadrilateral formed by $B$ and $D$ that can be shared with containing triangles. Since $D$ does not share an edge with a containing triangle, it can share at most one vertex with a containing triangle (Lemma~\ref{lem:onecontaining}). Now suppose that $B$ shares an edge with a containing triangle. Then one of the vertices of this edge is part of $D$ as well. Since the other two vertices of the quadrilateral are both part of $D$, they cannot be shared with containing triangles. On the other hand, if $B$ does not share an edge with a containing triangle, it too can share at most one vertex with containing triangles. Thus, in both cases, at most two vertices of the quadrilateral can be shared with containing triangles, which means that there are at least two vertices that are not shared. For each of these vertices, if it is the vertex of $D$ that is not shared with $B$, we charge the free edge in $D$, otherwise we charge the free edge in $B$ (both Type~3).

 \begin{figure*}[ht]
  \centering
  \includegraphics{./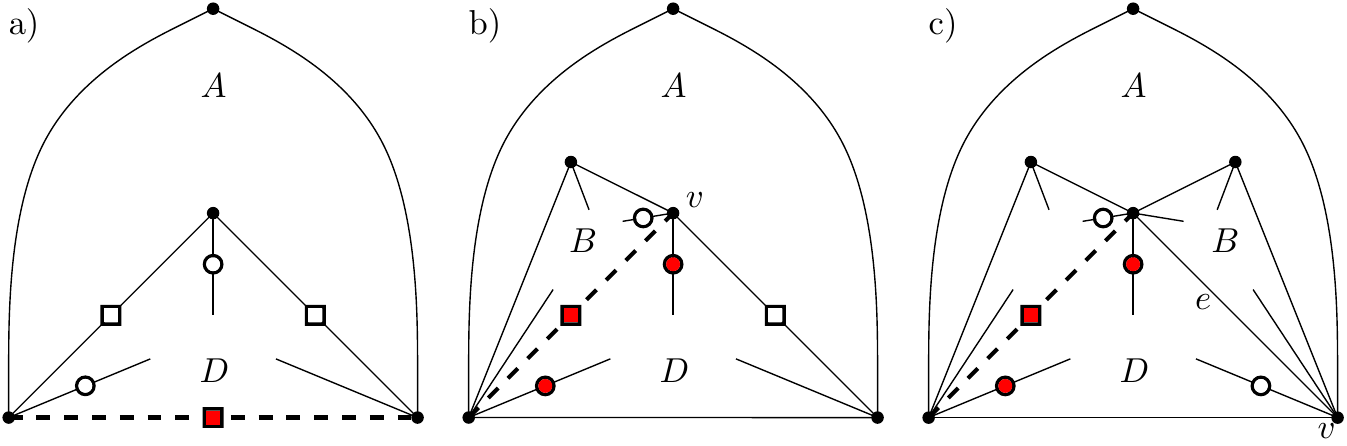}
  \caption{The edges that are charged if the deepest separating triangle shares an edge with a containing triangle.}
  \label{fig:caseB}
  \vspace{1em}
 \end{figure*}

 \smallskip
  \textbf{Case 3.} $D$ shares an edge with a containing triangle $A$ and does not share the other edges with any separating triangle (Figure~\ref{fig:caseB}a). In this case, we flip the shared edge and charge all of $D$'s edges, since one is the flipped edge (Type~1) and the others are not shared (Type~2). The vertex of $D$ that is not shared with $A$ cannot be shared with any containing triangle (Lemma~\ref{lem:unsharedvertex}), so we charge a free edge incident to this vertex (Type~3).

  Further, if $A$ shares an edge with a containing triangle, it either shares the flipped edge, which means that the containing triangle is removed by the flip, or it shares another edge, in which case the vertex that is not an endpoint of this edge cannot be shared with any containing triangle. If $A$ does not share an edge with a containing triangle, it can share at most one vertex with a containing triangle (Lemma~\ref{lem:onecontaining}). In both cases, one of the vertices of the flipped edge is not shared with any containing triangle (Type~3), so we charge a free edge incident to it.

 \smallskip
  \textbf{Case 4.} $D$ shares an edge with a containing triangle $A$ and exactly one other edge with a non-containing separating triangle $B$ (Figure~\ref{fig:caseB}b). In this case, we flip the edge that is shared with $B$. Let $v$ be the vertex of $D$ that is not shared with $A$. We charge the flipped edge (Type~1), the unshared edge of $D$ (Type~2) and two free edges inside $D$ that are incident to the vertices of the flipped edge (Type~4). We charge the last coin from a free edge in $B$ that is incident to $v$. We can charge it, since $v$ cannot be shared with a triangle that contains $D$ (Lemma~\ref{lem:unsharedvertex}) and every separating triangle that contains $B$ but not $D$ must share the flipped edge as well (Lemma~\ref{lem:containingshared}) and is therefore removed by the flip.

  All that is left is to argue that there can be no separating triangle contained in $B$ that requires the coin on this free edge to satisfy the second invariant. Every separating triangle that contains $D$ but not $B$ must share the flipped edge (Lemma~\ref{lem:containingshared}). Since $D$ already shares another edge with a containing triangle and it cannot share two edges with containing triangles (Lemma~\ref{lem:onecontainingtriangle}), all separating triangles that contain $D$ must also contain $B$. Since $D$ is deepest, $B$ must be deepest as well and therefore cannot contain any separating triangles.

 \smallskip
  \textbf{Case 5.} $D$ shares one edge with a containing triangle $A$ and the other two with non-containing separating triangles (Figure~\ref{fig:caseB}c). In this case we also flip the edge shared with one of the non-containing triangles. The charged edges are identical to the previous case, except that there is no unshared edge any more. Instead, we charge the last free edge in $D$.

  Before we argue why we are allowed to charge it, we need to give some names. Let $e$ be the edge of $D$ that is not shared with $A$ and is not flipped. Let $B$ be a non-containing triangle that shares $e$ with $D$ and let $v$ be the vertex that is shared by $A$, $B$ and $D$. Now, any separating triangle that shares $v$ and contains $D$ must contain $B$ as well. If it did not, it would have to share $e$ with $D$, but $D$ already shares an edge with a containing triangle and cannot share more than one (Lemma~\ref{lem:onecontainingtriangle}). Since the second invariant requires only a single free edge with a coin for each vertex of a separating triangle, it is enough that $v$ still has an incident free edge with a coin in $B$.
 
 \medskip
 This shows that we can charge 5 coins for every flip while maintaining the invariants, but we still need to show that after performing these flips we have indeed removed all separating triangles. So suppose that our graph contains separating triangles. Since each separating triangle is contained in a certain number of other separating triangles (which can be zero), there is at least one deepest separating triangle $D$. Since $D$ shares at most one edge with containing separating triangles (Lemma~\ref{lem:onecontainingtriangle}), one of the cases above must apply. This gives us an edge of $D$ to flip and five edges to charge, each of which is guaranteed by the invariants to have a coin. Therefore the process stops only after all separating triangles have been removed.
 
 Finally, since we pay 5 coins per flip and there are $3n - 6$ edges, by initially placing a coin on each edge, we flip at most $\lfloor(3n - 6)/5\rfloor$ edges.
 % If we now consider the situation before the final flip, we see that our invariants guarantee that there are at least 6 coins left. Since we only charge 5 coins per flip, there will always be at least one coin left unused and we actually perform only $\lfloor(3n - 7)/5\rfloor$ flips.
 Now consider the edges of the outer face. We show that these still have a coin at the end of the algorithm. By definition, these edges are not inside any separating triangle and since we only charge free edges inside separating triangles, they can only ever be charged as Type~1 or 2. Thus, if an edge of the outer face gets charged, it was part of the deepest separating triangle $D$ that was removed by the flip. Since an edge of the outer face cannot be shared with a non-containing separating triangle and it cannot be contained by any separating triangle, it can only be charged in Case~1 or 3. In Case~1, we charge only two of the free edges inside $D$, since there could be a containing separating triangle that shares just a vertex. However, this is not possible if $D$ uses an edge of the outer face (Lemma~\ref{lem:outertriangle}), so we can charge this free edge instead of the edge of the outer face. Since we flip one of the edges that is not on the outer face, after the flip, all edges of the outer face still have their coins. In Case~3 we can charge this remaining free edge for the same reasons. However, since in this case we actually flip the edge of the outer face, we are not done yet. The outer face after the flip consists of the flipped edge, one edge of the current outer face and a current interior edge. Charging the extra free edge guarantees that the flipped edge can retain its coin, but we need to ensure that the current interior edge has a coin as well. Let $A$ be the deepest of the separating triangles that contain $D$. Since it, too, uses an edge of the outer face, $A$ can only be contained in triangles that share this edge (Lemma~\ref{lem:outertriangle}). It also cannot contain any separating triangles other than $D$, as these would be deepest as well and we prefer to remove separating triangles that do not use an edge of the outer face. Therefore there can be no other separating triangle that uses the free edge incident to the vertex of $A$ that is not on the outer face and we can move this coin to the new edge of the outer face. Since this is the only case in which an edge of the outer face is flipped, this shows that the edges of the outer face retain their coins during the entire process. Therefore we actually only need $3n - 9$ coins, resulting in a maximum of $\lfloor(3n - 9)/5\rfloor$ flips.
\end{proof}

 \noindent The second step in the algorithm by Mori~\etal~\cite{mori2003diagonal} is to transform the obtained 4-connected triangulation into the canonical form using at most $2n - 11$ flips. We improve this slightly to $2n - 15$ flips, matching the lower bound by Komuro~\cite{komuro1997diagonal}. We first need to prove a few more lemmas.
 
\begin{figure}[ht]
 \centering
 \includegraphics{./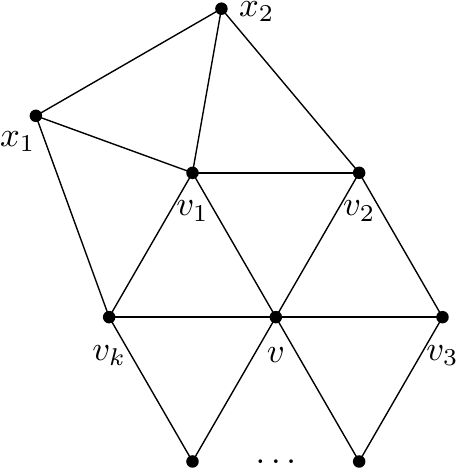}
 \caption{The neighbourhood of a vertex with degree at least 6 in a 4-connected triangulation.}
 \label{fig:degree}
\end{figure}

\begin{lemma}
 \label{lem:degree}
 In a 4-connected triangulation on $n \geq 13$ vertices, every vertex of degree at least 6 either has a neighbour of degree at least 6, or it can be connected to a vertex of degree at least 5 by a single flip.
\end{lemma}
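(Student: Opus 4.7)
The plan is to split on whether $v$ already has a neighbour of degree at least 5. Let $d=\deg(v)\geq 6$ and assume $v$ has no neighbour of degree at least 6; since 4-connectivity forces minimum degree 4, every neighbour of $v$ then has degree 4 or 5. I treat the two cases separately.

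\textbf{Case A (some neighbour of $v$ has degree 5).} Let $u$ be such a neighbour. I claim that flipping any edge not incident to $u$ leaves $v$ adjacent to $u$ and keeps $\deg(u)\geq 5$. A flip of $(a,b)$ changes only the degrees of $a,b$ and the two opposite vertices $c,d$, and only removes $(a,b)$ and adds $(c,d)$; hence provided $u\notin\{a,b\}$, both $\deg(u)$ and the edge $(v,u)$ are untouched (or $\deg(u)$ increases by one if $u\in\{c,d\}$). Such a flip exists because $u$ has only $\deg(u)=5$ incident edges while the triangulation has $3n-6\geq 33$ edges for $n\geq 13$, and every edge of a 4-connected triangulation is flippable: if the opposite vertices $c,d$ of $(a,b)$ were adjacent then $(a,c,d)$ would be a triangle that is not a face (since $\deg(a)\geq 4$ forces $(a,c)$ and $(a,d)$ to be non-consecutive in the rotation around $a$), hence a separating triangle.

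\textbf{Case B (every neighbour of $v$ has degree exactly 4).} Label the neighbours $u_1,\dots,u_d$ in the cyclic order around $v$, and let $w_i$ be the unique fourth neighbour of $u_i$ besides $v,u_{i-1},u_{i+1}$. For each edge $(u_i,u_{i+1})$, let $x_i$ denote the third vertex of the triangular face on the side opposite $v$; then $x_i$ is a common neighbour of $u_i$ and $u_{i+1}$ other than $v$, so $x_i\in\{u_{i-1},w_i\}\cap\{u_{i+2},w_{i+1}\}$. The heart of the argument is to rule out $x_i\in\{u_{i-1},u_{i+2}\}$: if $x_i=u_{i-1}$ then $(u_{i-1},u_{i+1})$ is an edge and $(v,u_{i-1},u_{i+1})$ is a triangle, but it cannot be a face because $u_i$ lies strictly between $u_{i-1}$ and $u_{i+1}$ in the rotation around $v$ (using $d\geq 6$), so it would be a separating triangle, contradicting 4-connectivity; the case $x_i=u_{i+2}$ is symmetric. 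Hence $x_i=w_i=w_{i+1}$ for every $i$, all $w_i$ coincide with a single vertex $w$, and $w$ is adjacent to every $u_i$, so $\deg(w)\geq d\geq 6>4$. This forces $w\neq u_j$ for every $j$ and $w\neq v$, so $w\notin\{v\}\cup N(v)$; the flip of $(u_i,u_{i+1})$ is therefore legal and produces the edge $(v,w)$, after which $\deg(w)\geq d+1\geq 7\geq 5$, and $v$ is connected to a vertex of degree at least 5.

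The main obstacle is the structural step in Case B — proving that the third vertex of every outer triangle $(u_i,u_{i+1},x_i)$ must be the common outside neighbour $w_i$ rather than $u_{i-1}$ or $u_{i+2}$, so that the $w_i$ collapse into a single common vertex $w$. Once this is established, the existence and legality of the desired flip follow immediately, and Case A reduces to a counting check on how a flip can affect degrees.
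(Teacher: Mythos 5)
The case split you use (Case A: some neighbour of $v$ has degree $5$; Case B: every neighbour has degree exactly $4$) is essentially the same dichotomy the paper gets from Komuro's result, and your Case B is a correct, self-contained rederivation of the ``bipyramid'' special case: you show the fourth neighbours $w_i$ all coincide with a single vertex $w$ of degree $\geq 6$, not adjacent to $v$, so the flip of any $(u_i,u_{i+1})$ legally creates $(v,w)$. That part is fine.

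However, Case A rests on a misreading of the lemma statement, and this is a genuine gap. ``Can be connected to a vertex of degree at least $5$ by a single flip'' means there is a flippable edge whose flip \emph{creates a new edge} from $v$ to a vertex of degree $\geq 5$; this is unmistakable from how the lemma is applied in the proof of Theorem~\ref{thm:canonical}, where performing this flip raises the degree of $x$ to $\Delta(T)+1$ and gives the partner vertex degree at least $6$. Your Case A instead argues that some flip leaves the already-existing edge $(v,u)$ and the degree of $u$ intact; this shows $v$ \emph{keeps} a degree-$5$ neighbour, which is vacuous here (no flip is needed for that) and does not produce the new edge the lemma needs. In fact, Case A is precisely the situation that consumes the bulk of the paper's proof: with $v_1$ a degree-$5$ neighbour of $v$, one must examine the vertices $x_1,x_2$ (and possibly $x_3$) that lie at flip-distance one from $v$ across the boundary edges of the link of $v$, and show by a degree-counting argument around $v_1,v_2$ that at least one of them has degree $\geq 5$. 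Without an argument of this kind, Case A is unproved, and the lemma does not follow.
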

\begin{proof}
 Let $v$ be a vertex of degree at least 6. Komuro~\cite{komuro1997diagonal} showed that either the graph consists of a cycle of length $n - 2$ with $v$ and one other vertex connected to every vertex on the cycle, or $v$ has a neighbour with degree at least 5. In the first case, there is a vertex of high degree that can be connected to $v$ by a single flip, so assume that this is not the case. Let $v_1$ be a neighbour of $v$ with degree at least 5 and let $v_2, \dots, v_k$ be the other neighbours of $v$, in clockwise order from $v_1$. Suppose that none of these neighbours have degree at least 6. Since the graph is 4-connected, this means that each has degree 4 or 5 and $v_1$ has degree exactly 5. Furthermore, no edge can connect two non-consecutive neighbours of $v$, as this would create a separating triangle. Let $x_1$ and $x_2$ be the neighbours of $v_1$ that are not adjacent to $v$, in clockwise order (see Figure~\ref{fig:degree}). We distinguish two cases, based on the degree of $v_2$:
 
 If $v_2$ has degree 4, $x_2$ must be connected to $v_3$. Both $x_1$ and $x_2$ can be connected to $v$ with a single flip, so if either has degree at least 5, we are done. The only way to keep their degree at 4 is to connect both $x_1$ and $v_k$ to $v_3$. But this would give $v_3$ degree at least 6, which is a contradiction. Therefore either $x_1$ or $x_2$ must have degree at least 5.
 
 If $v_2$ has degree 5, let $x_3$ be its new neighbour. Again, if one of $x_1$, $x_2$ or $x_3$ has degree at least 5, we are done. Since $x_2$ already has degree 4, the only way to keep its degree below 5 is to connect $x_1$ and $x_3$ by an edge. But then both $x_1$ and $x_3$ have degree 4 and the only way to keep one at degree 4 is to create an edge to the other. Therefore at least one of $x_1$, $x_2$ or $x_3$ must have degree at least 5.
\end{proof}

\noindent In 1931, Whitney \cite{whitney1931theorem} showed that any 4-connected triangulation has a Hamiltonian cycle. The main ingredient of his proof is the following lemma:

\begin{lemma}
 \label{lem:cyclepath}
 \from{Whitney \cite{whitney1931theorem}}
 Consider a cycle $C$ in a 4-connected triangulation, along with two distinct vertices $a$ and $b$ on $C$. These vertices split $C$ into two paths $C_1$ and $C_2$ with $a$ and $b$ as endpoints. Consider all edges on one side of the cycle, say the inside. If no vertex on $C_1$ (resp. $C_2$) is connected to another vertex on $C_1$ (resp. $C_2$) by an edge inside $C$, we can find a path from $a$ to $b$ that passes through each vertex on and inside $C$ exactly once and uses only edges of $C$ and inside $C$.
\end{lemma}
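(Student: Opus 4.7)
The plan is to prove this by strong induction on the number $N$ of vertices lying on or strictly inside $C$. The base case $N=3$ is immediate: $C$ is a triangle with vertices $a$, $b$, and a third vertex $c$, and the path $a \to c \to b$ visits every vertex using only edges of $C$.

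For the inductive step, I would split on whether $C$ admits a chord strictly inside. By hypothesis, every chord of $C$ joins some $x \in C_1 \setminus \{a,b\}$ to some $y \in C_2 \setminus \{a,b\}$, so the chords form a non-crossing nested family inside $C$. If at least one chord exists, I would pick an extremal chord $(x,y)$ whose $a$-side region $R_1$ (bounded by the sub-arc of $C_1$ from $a$ to $x$, the chord, and the sub-arc of $C_2$ from $y$ back to $a$) contains no other chord of $C$; such a chord exists by finite nesting. The boundary cycle $C'$ of $R_1$ is split by $a$ and $x$ into an arc inside $C_1$ and an arc consisting of the chord plus a piece of $C_2$. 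Any chord of $C'$ would be a chord of $C$ lying in $R_1$, of which there are none, so the hypothesis is inherited; by induction there is a Hamiltonian path $P_1$ from $a$ to $x$ inside $R_1$. A symmetric argument on the $b$-side region $R_2$ produces a Hamiltonian path $P_2$ from $x$ to $b$, and the concatenation $P_1 \cdot P_2$ is the desired path. If instead $C$ has no chord at all, then every edge inside $C$ is incident to an interior vertex; I would pick an interior vertex $v$ adjacent to $C$ and reroute $C$ through $v$ to obtain a cycle $\tilde C$ with one fewer interior vertex, then apply the induction hypothesis to $\tilde C$.

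The main obstacle is the chord-selection step. The naive extremal choice works cleanly for $R_1$, but a chord of $C$ of the form $(x, w)$ with $w$ on $C_2$ strictly between $y$ and $b$ can lie inside $R_2$ with both endpoints on the chord-plus-$C_2$ arc of $\partial R_2$, which breaks the hypothesis for the sub-problem on that side. I would address this by a secondary \emph{peeling} procedure at $x$: every chord $(x, w)$ inside $R_2$ carves off a sub-region bounded by two chords at $x$ and a piece of $C_2$, to which the lemma applies with distinguished vertices on the two chords and no further chords inside; peeling these off sequentially reduces $R_2$ to a region on which the extremal chord rule gives a well-posed sub-problem, and the Hamiltonian paths from the peeled pieces splice together into the required $x$-to-$b$ path. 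Getting this peeling and splicing right, and verifying that each sub-problem it produces genuinely has fewer vertices and inherits the hypothesis, is where the bulk of the technical work lies.
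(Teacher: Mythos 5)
The paper cites this lemma to Whitney and does not reprove it, so there is no in-paper proof to compare against; I can only assess your argument on its own terms, and it has a fundamental gap in the chord case. The regions $R_1$ and $R_2$ you carve off share not just the vertex $x$ but the entire chord $(x,y)$ and both of its endpoints. A Hamiltonian path $P_1$ from $a$ to $x$ inside $R_1$ must pass through $y$ (since $y$ lies on $\partial R_1$), and a Hamiltonian path $P_2$ from $x$ to $b$ inside $R_2$ must also pass through $y$ (since $y$ lies on $\partial R_2$). The concatenation $P_1 \cdot P_2$ therefore visits $y$ twice and is a closed walk, not a simple path. This is not fixed by the ``peeling'' you describe, which addresses a different problem (inheritance of the no-same-side-chord hypothesis on the $R_2$ side) and does nothing about the repeated vertex. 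The standard way to resolve this is to set up the two sub-problems so that exactly one of $x,y$ is a distinguished endpoint in each region and to control which side of the chord each of them is ``charged'' to, which requires a more careful inductive statement than the one you are recursing on; Whitney's original argument does exactly this kind of bookkeeping and is considerably more delicate than a two-region split.

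The chordless case is also not sound as written. If the face of the triangulation inside $C$ incident to the edge $(p,q)\subset C$ is $(p,q,v)$ and you reroute $C$ through $v$, the new cycle $\tilde C$ contains $v$, and any other neighbour $w$ of $v$ on $C$ now yields a chord $(v,w)$ of $\tilde C$. In a 4-connected triangulation $v$ can have arbitrarily many neighbours on $C$ (consider a near-wheel interior), and several of them can land on the same arc of $\tilde C$, which destroys the hypothesis you need for the induction. So both branches of the induction, as stated, break: the chord branch produces a non-simple walk, and the chordless branch does not preserve the invariant. A correct proof would need either a strengthened inductive hypothesis that tracks which endpoint of the splitting chord is consumed by which side, or a different reduction in the chordless case that certifies the rerouted cycle still has no same-side chords.
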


\noindent We use this to prove the following lemma:

\begin{figure}[ht]
 \centering
 \includegraphics{./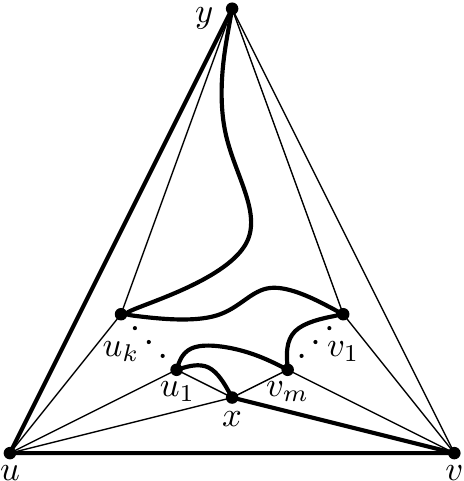}
 \caption{A possible Hamiltonian cycle that uses $(u, v)$ and has all non-cycle edges incident to $u$ on one side of the cycle and all non-cycle edges incident to $v$ on the other.}
 \label{fig:hamiltoniancycle}
\end{figure}

\begin{lemma}
 \label{lem:goodcycle}
 For every edge $(u, v)$ in a 4-connected triangulation, there is a Hamiltonian cycle that uses $(u, v)$ such that all non-cycle edges incident to $u$ are on one side of the cycle and all non-cycle edges incident to $v$ are on the other side.
\end{lemma}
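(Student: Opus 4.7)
The plan is to use Lemma~\ref{lem:cyclepath} to extract a Hamiltonian path of $G-\{u,v\}$ from $a$ to $b$, where $(u,v,a)$ and $(u,v,b)$ are the two triangular faces incident to $(u,v)$, and then to close this path into the desired Hamiltonian cycle by prepending $u$ via the edge $(u,a)$, appending $v$ via $(b,v)$, and adding the edge $(u,v)$.

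First I would show that $(a,b)$ is not an edge of $G$. Otherwise $\{u,v,a,b\}$ induces a $K_4$, and since a 4-connected triangulation has no separating triangles the two 3-cycles $(u,a,b)$ and $(v,a,b)$ would also have to be faces; the resulting graph would be $K_4$, contradicting $n\geq 6$. Writing the CCW neighbours of $u$ as $v, a, n_1, \dots, n_k, b$ and those of $v$ as $u, b, m_1, \dots, m_j, a$ (with $k,j\geq 1$, since the minimum degree is at least $4$), I set
\[
C' = a, n_1, \dots, n_k, b, m_1, \dots, m_j, a.
\]
Consecutive vertices of $C'$ share a triangular face with $u$ or with $v$ and so are adjacent in $G$; the listed vertices are pairwise distinct because any common neighbour of $u$ and $v$ other than $a,b$ would close a separating 3-cycle through $(u,v)$. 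Hence $C'$ is a simple cycle. In the combinatorial embedding it separates the plane into two regions; the region tiled by the triangular faces incident to $u$ or $v$ contains $u$ and $v$ but no other vertex of $G$. Call the opposite region the \emph{far side} of $C'$.

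Next I apply Lemma~\ref{lem:cyclepath} to $C'$ on the far side, with splitter vertices $a$ and $b$ and arcs $C_1' = (a, n_1, \dots, n_k, b)$ and $C_2' = (a, m_j, \dots, m_1, b)$. The chord-free hypothesis holds for $C_1'$: any chord within $C_1'$ (an edge $(a,b)$, or $(n_i,n_\ell)$ with $|i-\ell|\geq 2$, or $(a,n_\ell)$ with $\ell\geq 2$, or $(n_\ell,b)$ with $\ell\leq k-1$) would close a 3-cycle through $u$ that is not a face, contradicting the absence of separating triangles. The symmetric argument with $v$ disposes of $C_2'$. The lemma therefore produces a path $P$ from $a$ to $b$ that visits every vertex on $C'$ or on the far side exactly once, using only edges of $C'$ or edges on the far side. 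Because the near side contains no vertex other than $u$ and $v$, the set of vertices of $P$ is exactly $V(G)\setminus\{u,v\}$, so $P$ is a Hamiltonian path of $G-\{u,v\}$.

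The cycle $H$ formed by $P$ together with the three edges $(u,a), (b,v), (v,u)$ is a Hamiltonian cycle of $G$ that uses $(u,v)$. At $u$ it uses $(u,v)$ and $(u,a)$, and because $a$ is the CCW-successor of $v$ in the cyclic neighbour order at $u$, every remaining edge at $u$ lies in the angular sector between $(u,v)$ and $(u,a)$ that also contains the face $(u,v,b)$; hence all non-cycle edges at $u$ lie on a common side of $H$. The symmetric analysis at $v$ places every non-cycle edge at $v$ on the side of $H$ containing the face $(u,v,a)$. These two sides of $H$ are opposite because the faces $(u,v,a)$ and $(u,v,b)$ sit on opposite sides of the edge $(u,v)\in H$, establishing the required side condition. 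The main obstacle I anticipate is verifying the structural claims about $C'$ --- that it is a simple cycle, that its near region is empty of other vertices, and that its two arcs are chord-free on the far side --- each of which reduces to exploiting the absence of separating triangles.
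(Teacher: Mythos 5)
Your proof is correct and follows essentially the same route as the paper: form the cycle of neighbours of $u$ and $v$ (your $a,b$ are the paper's $x,y$), use the absence of separating triangles to show the two arcs have no chords on the far side, invoke Whitney's Lemma~\ref{lem:cyclepath} to get a Hamiltonian path from $a$ to $b$, and close it into a cycle through $u$ and $v$. You are somewhat more explicit than the paper about a few points the paper leaves implicit (that the near region contains no vertices besides $u$ and $v$, that $(a,b)$ cannot be an edge, and the final check that the non-cycle edges land on opposite sides of $H$), but the underlying argument is the same.
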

\begin{proof}
 Let $x$ and $y$ be the other vertices of the faces that have $(u, v)$ as an edge. Let $v, x, u_1, \dots , u_k, y$ be the neighbours of $u$ in counter-clockwise order and let $y, v_1, \dots , v_m, x, u$ be the neighbours of $v$ (see Figure~\ref{fig:hamiltoniancycle}). Note that all the $u_i$ and $v_i$ are distinct vertices, as a vertex other than $x$ or $y$ that is adjacent to both $u$ and $v$ would form a separating triangle. This means that $x, u_1, \dots , u_k, y, v_1, \dots , v_m, x$ forms a cycle. Moreover, no two non-consecutive neighbours of $u$ can be connected by an edge, since this would create a separating triangle as well. Since this holds for the neighbours of $v$ as well, $x$ and $y$ split the cycle into two parts that satisfy the conditions of Lemma~\ref{lem:cyclepath}. If we call the side of the cycle that does not contain $(u, v)$ the inside, this means that we can find a path from $x$ to $y$ that passes through each vertex on and inside the cycle exactly once and uses only edges of and inside the cycle. This path can be completed to a Hamiltonian cycle that satisfies the conditions by adding the edges $(y, u)$, $(u, v)$ and $(v, x)$.
\end{proof}

\begin{theorem}
\label{thm:canonical}
Any 4-connected triangulation $T$ on $n \geq 13$ vertices can be transformed into the canonical triangulation using at most $2n - \Delta(T) - 8$ flips, where $\Delta(T)$ is the maximum degree among vertices of $T$.
\end{theorem}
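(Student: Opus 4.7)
The plan is to sharpen Mori~\etal's two-outerplanar decomposition: find an edge $(u,v)$ with $d_u+d_v\ge\Delta+6$, invoke Lemma~\ref{lem:goodcycle} to route a Hamiltonian cycle through $(u,v)$ so that all non-cycle edges at $u$ lie on one side and all non-cycle edges at $v$ on the other, and then convert each of the two resulting triangulated polygons into a fan centred at $u$ and at $v$ respectively. Since each side finishes with one of $u,v$ adjacent to every other vertex, the output is the canonical triangulation.

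To choose the edge, take $u$ to be a vertex of maximum degree $\Delta=\Delta(T)$. Because $n\ge 13$, every degree is at least $4$, and the degrees sum to $6n-12$, we must have $\Delta\ge 6$, so Lemma~\ref{lem:degree} applies at $u$. In the favourable subcase $u$ already has a neighbour $v$ with $d_v\ge 6$ and no set-up flip is needed. Otherwise, Lemma~\ref{lem:degree} supplies a single flip that creates a new edge $(u,w)$ where $w$ had degree at least $5$ and therefore has degree at least $6$ afterwards; I pay that one set-up flip, rename $w$ to $v$, and note that $u$ now has degree $\Delta+1$. Either way I then apply Lemma~\ref{lem:goodcycle} to $(u,v)$ to obtain a Hamiltonian cycle $C$ that splits the triangulation into two triangulated $n$-gons sharing $C$, one holding all $d_u-2$ chords at $u$ and the other all $d_v-2$ chords at $v$.

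I then transform each polygon into the fan at its distinguished vertex using the standard fact that a triangulated $k$-gon with $d$ diagonals at a fixed boundary vertex $x$ can be converted into the fan at $x$ by exactly $(k-3)-d$ flips: while the triangulation is not yet a fan, some triangle at $x$ has a non-polygon-boundary diagonal as its non-$x$ edge, and flipping that diagonal creates a new diagonal incident to $x$. Applied to both sides of $C$ this costs
\[
\bigl((n-3)-(d_u-2)\bigr)+\bigl((n-3)-(d_v-2)\bigr)=2n-2-d_u-d_v
\]
flips, which is at most $2n-\Delta-8$ in the favourable subcase (where $d_u=\Delta$ and $d_v\ge 6$), while in the set-up subcase the total is $1+2n-2-(\Delta+1)-6=2n-\Delta-8$ as well, giving the stated bound.

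The main obstacle will be the set-up subcase: the flip produced by Lemma~\ref{lem:degree} can drop a neighbour's degree to $3$, thereby introducing a separating triangle and violating the 4-connectivity hypothesis on which Lemma~\ref{lem:goodcycle} relies. The resolution will need either to select the flip inside the rigid local configuration produced by the proof of Lemma~\ref{lem:degree} so that 4-connectivity is preserved, or to argue directly that a Hamiltonian cycle through the newly created edge $(u,w)$ with the one-sided-neighbour property still exists---any separating triangle so created contains only a single vertex, so it perturbs the cycle structure only locally. Once that subtlety is handled, the polygon-to-fan count and the final identification with the canonical triangulation are routine bookkeeping.
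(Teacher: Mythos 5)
Your route is the same one the paper takes: pick a maximum-degree vertex $u$, use Lemma~\ref{lem:degree} to pair it (possibly after one set-up flip) with a vertex $v$ of degree at least $6$, invoke Lemma~\ref{lem:goodcycle} on $(u,v)$ to split $T$ into two triangulated polygons sharing a Hamiltonian cycle, and fan out each side towards its distinguished ear. Your polygon-to-fan count $(k-3)-d = n-1-d_u$ (and similarly for $v$) is exactly the ``$n-d_v-1$ flips to make an ear dominant in an outerplanar graph'' bound that the paper attributes to Mori~\etal, and the arithmetic in both subcases comes out to $2n-\Delta(T)-8$, matching the paper.

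The obstacle you flag at the end is a genuine one, and you are right to flag it: the set-up flip produced by Lemma~\ref{lem:degree} is selected only to create an edge from $u$ to a vertex of degree at least $5$; the lemma says nothing about the degrees of the two endpoints of the \emph{removed} edge, both of which drop by one. If one of them had degree $4$, the post-flip graph has a degree-$3$ vertex, hence a separating triangle, and Lemma~\ref{lem:goodcycle} (whose proof rests on Whitney's path lemma and explicitly on the absence of separating triangles) no longer applies as stated. Indeed, in the ``$v_2$ has degree $4$'' subcase of Lemma~\ref{lem:degree}, the flip of $(v_1,v_2)$ that attaches $x_2$ necessarily sends $v_2$ to degree $3$, and in the other flips the vertex $v_k$ or $v_3$ may likewise have degree $4$. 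The paper's proof performs the flip and then applies Lemma~\ref{lem:goodcycle} to the flipped graph without arguing that 4-connectivity is preserved, so this point is also left unaddressed there; your proposal does not resolve it either, but it is honest in naming it. To close the gap one would need to do one of the two things you sketch: show from the local configuration in Lemma~\ref{lem:degree} that a flip can always be chosen whose removed edge has both endpoints of degree at least $5$, or re-prove the one-sided Hamiltonian cycle through $(u,v)$ directly when the only separating triangle present is the shallow one (enclosing a single degree-$3$ vertex) that the set-up flip can create.
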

\begin{proof}
We use the same approach as Mori~\etal~\cite{mori2003diagonal}, but instead of taking an arbitrary Hamiltonian cycle, we use the preceding lemmas to carefully construct a good cycle.

Let $x$ be a vertex of maximal degree in $T$ and suppose for now that $x$ has a neighbour $y$ with degree at least 6. We use the cycle given by Lemma~\ref{lem:goodcycle} to decompose $T$ into two outerplanar graphs $T_1$ and $T_2$, each sharing the cycle and having all edges on the inside and outside, respectively. Note that $x$ is an ear in one of these, say $T_2$, while $y$ is an ear in the other. Mori~\etal showed that we can make any vertex $v$ of an outerplanar graph dominant using at most $n - d_v - 1$ flips, where $d_v$ is the degree of $v$. Therefore we can make $x$ dominant in $T_1$ using at most $n - \Delta(T) - 1$ flips. These flips are allowed because $x$ does not have any incident edges in $T_2$. Then we can make $y$ dominant in $T_2$ using at most $n - d_y - 1 \leq n - 7$ flips. Thus we can transform $T$ into the canonical triangulation using at most $2n - \Delta(T) - 8$ flips.

Since any triangulation on $n \geq 13$ vertices has a vertex of degree at least 6, if $x$ does not have a neighbour with degree at least 6, Lemma~\ref{lem:degree} tells us that there is a vertex $v$ with degree at least 5 that can be connected to $x$ by a single flip. We perform this flip and use $v$ in the place of $y$. Since $x$ now has degree $\Delta(T) + 1$, we can make it dominant using at most $n - \Delta(T) - 2$ flips. Similarly, $v$ has degree at least 6 after the flip, so we can make it dominant using at most $n - 7$ flips. Including the initial flip, we again obtain the canonical triangulation using at most $2n - \Delta(T) - 8$ flips.
\end{proof}

\noindent Combining this result with Theorem~\ref{thm:4-connected} gives the following bound on the maximum flip distance between two triangulations.

\begin{corollary}
 Any two triangulations $T_1$ and $T_2$ can be transformed into each other using at most $5.2 n - 19.6 - \Delta(T_1) - \Delta(T_2)$ flips, where $\Delta(T)$ is the maximum degree among vertices of $T$.
\end{corollary}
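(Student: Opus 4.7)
The plan is to combine Theorem~\ref{thm:4-connected} and Theorem~\ref{thm:canonical} by routing every transformation through the canonical triangulation. Given $T_1$ and $T_2$, I would apply Theorem~\ref{thm:4-connected} to each $T_i$ to obtain a 4-connected triangulation $T_i^\star$ in at most $\lfloor(3n-9)/5\rfloor$ flips, then apply Theorem~\ref{thm:canonical} to transform each $T_i^\star$ into the canonical triangulation in at most $2n - \Delta(T_i^\star) - 8$ further flips. Reversing the sequence produced for $T_2$ and concatenating it with the one produced for $T_1$ yields a valid transformation from $T_1$ to $T_2$, since every flip is reversible.

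Summing the four flip counts and dropping the floors gives a total of at most
\[
 2\cdot\frac{3n-9}{5} \;+\; 2\bigl(2n - 8\bigr) \;-\; \Delta(T_1^\star) \;-\; \Delta(T_2^\star) \;=\; 5.2\,n - 19.6 - \Delta(T_1^\star) - \Delta(T_2^\star).
\]
The corollary then reduces to the observation that one may take $\Delta(T_i^\star) \geq \Delta(T_i)$, i.e., that the 4-connecting stage need not decrease the maximum degree. I would argue this by fixing a vertex $x_i$ of maximum degree in $T_i$ and noting that in each of the five cases of the proof of Theorem~\ref{thm:4-connected} the algorithm has enough flexibility — either in the choice of which deepest separating triangle to process, or in the choice of which of its edges to flip — to avoid ever removing an edge incident to $x_i$, so that $x_i$ retains its degree throughout the 4-connecting stage.

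The main obstacle is precisely this degree-preservation step, since the algorithm in Theorem~\ref{thm:4-connected} is designed around the charging scheme rather than with degree preservation in mind, and in Cases~3 and~4 the flipped edge is essentially dictated by the containing triangle. Verifying the claim therefore reduces to checking that whenever a forced flip would remove an edge incident to $x_i$, one can reroute the algorithm — for example by processing a different deepest separating triangle first, or by absorbing the extra cost using slack elsewhere in the charging budget — without breaking the invariants used in the proof of Theorem~\ref{thm:4-connected}. Once this local modification is in place, the arithmetic above yields the corollary immediately.
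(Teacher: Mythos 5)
Your approach matches what the paper intends: route each $T_i$ through the 4-connecting step of Theorem~\ref{thm:4-connected} and then to the canonical form via Theorem~\ref{thm:canonical}, reverse one sequence, and concatenate; the arithmetic $2\cdot\tfrac{3n-9}{5} + 2(2n-8) = 5.2n - 19.6$ is correct. The paper states the corollary with no proof, treating the composition as immediate, so there is no separate argument of theirs to compare against.

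The point you flag is exactly the place where the composition is not immediate: Theorem~\ref{thm:canonical} gives the bound $2n - \Delta(T_i^\star) - 8$ in terms of the maximum degree of the \emph{already 4-connected} triangulation $T_i^\star$, while the corollary is stated with $\Delta(T_i)$. A flip decreases the degree of both endpoints of the flipped edge, so a max-degree vertex on a separating triangle can lose degree during the 4-connecting stage; small examples show $\Delta$ really can drop. Your proposed repair, however, is not carried out and is not obviously available. In Case~1 the algorithm may flip any edge of $D$, so a designated vertex $x_i$ can be protected there, but in Cases~2--5 the flipped edge is constrained to be a shared edge (and in Cases~3--5 sometimes a specific one), and Lemma~\ref{lem:flip} only guarantees safety for those choices; flipping a different edge could create new separating triangles. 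There may also be no alternative deepest separating triangle to process first. So ``reroute the algorithm'' needs an actual argument showing that the flexibility always exists, or a different accounting (e.g., showing that each degree-loss at $x_i$ can be paid for out of slack in the budget). As written this step is missing, and it is the one non-trivial step in the proof of the corollary.
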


\noindent Theorem~\ref{thm:canonical} matches the worst-case lower bound of $2n - 15$ flips if the maximum degree is at least 7, but we need a slightly stronger result if the maximum degree is 6.

\begin{lemma}
 \label{lem:6-6flip}
 In a 4-connected triangulation on $n \geq 19$ vertices with maximum degree 6, there is always a pair of vertices of degree 6 that can be connected by a flip.
\end{lemma}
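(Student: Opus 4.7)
I will prove this by contradiction. My starting point is a preliminary structural observation about 4-connected triangulations: for every edge $(a,b)$ with incident faces $uab$ and $vab$, the two apexes $u$ and $v$ are non-adjacent. If they were adjacent, then 4-connectedness would force $N(u)\cap N(v)$ to be exactly $\{a,b\}$ (since any third common neighbor together with $u$ and $v$ would form a separating triangle), and tracing the three faces $uab$, $uav$, $ubv$ around $u$ would show that $u$ has only the three neighbors $a,b,v$, contradicting 4-connectedness. Consequently, flipping any edge always produces a valid (non-duplicate) edge between its two apexes, so it suffices to exhibit an edge both of whose apexes have degree $6$.

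Euler's formula together with $\Delta(T)\le 6$ gives $2n_4+n_5=12$, so $n_6=n+n_4-12\ge 7$ when $n\ge 19$. Assume for contradiction that no edge has two degree-$6$ apexes. Then at every degree-$6$ vertex $v$, each of the six link edges of $v$ has its outside apex of degree at most $5$. Double-counting the $6n_6$ pairs (degree-$6$ vertex, link edge) against the apex-slots at degree-$\le 5$ vertices (each vertex $w$ of degree $d$ appears as apex in at most $d$ edges, one per edge of its link) yields $6n_6\le 4n_4+5n_5=60-6n_4$, which together with $n_6=n+n_4-12$ simplifies to $n\le 22-2n_4$. This already disposes of the case $n\ge 23$, and for $n\in\{19,\ldots,22\}$ the inequality is tight enough to force $n_4\le 1$.

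For the remaining tight regime I would exploit the structural rigidity the counting forces. Pick a degree-$6$ vertex $v$ all of whose six neighbors $v_1,\ldots,v_6$ have degree exactly $5$; such a $v$ exists either because some degree-$6$ vertex is isolated in the induced subgraph on degree-$6$ vertices, or after a short sub-argument that sharpens the count by treating adjacent degree-$6$ pairs. Because the link of $v$ is a chordless $6$-cycle (4-connectedness) and each $v_i$ has precisely two external neighbors, consecutive link edges share their ``external common neighbor''; this pins down six external vertices $x_1,\ldots,x_6$ (each of degree at most $5$) as the outside apexes and fully determines the first two neighborhoods of $v$. Each $x_i$ then has a unique fifth neighbor $z_i$ which, by a short adjacency argument ruling out $z_i\in\{x_j\}$, must lie among the $n_6-1$ remaining degree-$6$ vertices. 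A degree sum on this residual subgraph of degree-$6$ vertices, after subtracting the at most $6$ edges to the $x$-layer, gives at least $3(n_6-1)-3$ edges among $n_6-1$ vertices, violating $|E|\le 3|V|-6$ for planar simple graphs and hence contradicting planarity of $T$.

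The hardest step is this last structural pin-down: accommodating possible coincidences among the outside apexes $w_i$, handling a stray degree-$4$ vertex (which is allowed when $n_4=1$), and especially dealing with the alternative case in which every degree-$6$ vertex has a degree-$6$ neighbor. In each branch the contradiction comes from the same planarity violation, but the bookkeeping requires careful attention to the various ways the layers can collapse onto one another.
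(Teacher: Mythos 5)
Your approach is genuinely different from the paper's, but it has a real gap. The paper considers flipping the edges $(v,v_i)$ \emph{incident} to a degree-6 vertex $v$, which would create edges $(v_{i-1},v_{i+1})$ between non-consecutive link vertices. The six forbidden pairs $(v_{i-1},v_{i+1})$ form two disjoint triangles on $\{v_1,v_3,v_5\}$ and $\{v_2,v_4,v_6\}$, so a vertex cover needs at least $4$ vertices; hence each degree-6 vertex has at least $4$ neighbours of degree at most $5$. The same vertex-cover count shows every vertex of degree at most $5$ has at most $2$ neighbours of degree $6$. Counting cross edges then gives $4n_6 \le 2(n_4+n_5)=24$, so $n_6\le 6$ and $n\le 18$ outright. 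Your count flips the \emph{link} edges $(v_i,v_{i+1})$ instead, connecting $v$ to an outside apex $x_i$, and double-counts (degree-$6$ vertex, link edge) incidences against apex slots at low-degree vertices. That argument is correct but strictly weaker: it yields $6n_6\le 4n_4+5n_5$, hence $n\le 22-2n_4$, which leaves $n\in\{19,\dots,22\}$ with $n_4\le 1$ unresolved.

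The "structural rigidity" sketch you offer to close this residual window is not a proof. You assert, without establishing it, the existence of a degree-$6$ vertex whose six neighbours all have degree exactly $5$ ("either because ... isolated ... or after a short sub-argument"); you assume the six outside apexes $x_1,\dots,x_6$ are pairwise distinct and that the second neighbourhood of $v$ is "fully determined," which ignores possible coincidences among the $x_i$; you claim "by a short adjacency argument" that each $x_i$'s remaining neighbour $z_i$ has degree $6$, with no argument given; and the concluding planarity violation is not worked out and has to survive all the collapse cases you yourself flag as the "hardest step." As written, the proposal proves the lemma only for $n\ge 23$. The preliminary observation (that in a $4$-connected triangulation the two apexes of any edge are non-adjacent, so every flip is legal) is correct and is also implicit in the paper's use of separating triangles, but the main counting inequality needs to be the stronger one based on neighbour pairs rather than apexes.
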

\begin{proof}
 Suppose that such a pair does not exist and consider the neighbourhood of a vertex $v$ of degree 6. Each edge incident to $v$ can be flipped, otherwise there would be an edge connecting two non-consecutive neighbours of $v$, forming a separating triangle. Thus there are 6 pairs of vertices that can be connected by a flip and one vertex of each pair needs to have degree at most 5. To realize this, $v$ needs to have at least 4 neighbours of degree at most 5. Similarly, a vertex of degree 5 needs at least 3 such neighbours and a vertex of degree 4 needs at least 2. Therefore each vertex of degree at most 5 can have at most 2 neighbours of degree 6.

 Let $n_d$ be the number of vertices of degree $d$ and let $k$ be the number of edges between vertices of degree 6 and vertices of degree at most 5. Every vertex of degree 6 needs at least 4 neighbours of degree at most 5, so $k \geq 4 n_6$. But every vertex of degree at most 5 can have at most 2 neighbours of degree 6, so $k \leq 2 (n_4 + n_5)$. Combining these inequalities, we get that $n_6 \leq (n_4 + n_5)/2$. Since a triangulation with maximum degree 6 can have at most 12 vertices of degree less than 6, it follows that $n = n_4 + n_5 + n_6 \leq 18$. Thus for $n \geq 19$, there is always a pair of vertices of degree 6 that can be connected by a flip.
\end{proof}

\begin{theorem}
\label{thm:canonicalmaxdeg6}
Any 4-connected triangulation on $n \geq 19$ vertices with maximum degree 6 can be transformed into the canonical triangulation using at most $2n - 15$ flips.
\end{theorem}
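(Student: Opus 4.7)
The plan is to invest one preparatory flip that raises the maximum degree of $T$ from $6$ to $7$ and creates two adjacent vertices of degree $7$, and then apply the decomposition argument of Theorem~\ref{thm:canonical} to the new triangulation $T'$. By Lemma~\ref{lem:6-6flip}, since $n \geq 19$, there exist two degree-$6$ vertices $u, v$ of $T$ that can be connected by a single flip. Let $(a, b)$ be the edge whose flip produces $(u, v)$, so $(a, b, u)$ and $(a, b, v)$ are the two triangles incident to $(a, b)$ in $T$. After the flip, $(u, v)$ is an edge of $T'$, the degrees of $u$ and $v$ both rise to $7$, and all other degrees are unchanged and so at most $6$; in particular $\Delta(T') = 7$ and $u, v$ are adjacent degree-$7$ vertices.

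Assuming $T'$ is 4-connected (see the obstacle below), Lemma~\ref{lem:goodcycle} applied to $(u, v)$ yields a Hamiltonian cycle $C$ through $(u, v)$ whose non-cycle edges at $u$ all lie on one side and whose non-cycle edges at $v$ all lie on the other. Decomposing $T'$ along $C$ produces two outerplanar triangulations $T_1$ and $T_2$ in which $u$ is an ear of $T_2$ (so has degree $7$ in $T_1$) and $v$ is an ear of $T_1$ (so has degree $7$ in $T_2$). Applying the outerplanar bound of Mori~\etal, we can make $u$ dominant in $T_1$ using at most $n - 7 - 1 = n - 8$ flips, which do not touch $T_2$ because $u$ has no non-cycle edges there, and then make $v$ dominant in $T_2$ using at most $n - 8$ further flips. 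The result is the canonical triangulation, obtained in $1 + (n - 8) + (n - 8) = 2n - 15$ flips.

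The main obstacle is verifying that the preparatory flip can be chosen so that $T'$ is 4-connected. Since only the edge $(u, v)$ is new, any separating triangle in $T'$ must contain $(u, v)$, and so corresponds to a vertex $w \notin \{a, b\}$ that is a common neighbor of $u$ and $v$ in $T$. Each of $u$ and $v$ has only four neighbors outside $\{a, b\}$, which sharply bounds the possible troublesome vertices $w$. To handle this, I would refine the counting used in Lemma~\ref{lem:6-6flip} --- which already produces several candidate 6-6 flips at every degree-$6$ vertex when $n \geq 19$ --- to show that at least one such flip produces no third common neighbor, and therefore yields a 4-connected $T'$. This is the technically most delicate step and may require a short local case analysis of the configurations around $v$; but the counting slack in Lemma~\ref{lem:6-6flip} grows with $n$, so the argument should go through under the hypothesis $n \geq 19$.
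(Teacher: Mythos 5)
Your proof follows exactly the same route as the paper's: use Lemma~\ref{lem:6-6flip} to find a preparatory flip joining two degree-$6$ vertices $x$ and $y$, raising both to degree $7$, then decompose along the Hamiltonian cycle from Lemma~\ref{lem:goodcycle} through $(x,y)$, make $x$ dominant in one outerplanar half with at most $n-8$ flips and $y$ dominant in the other with at most $n-8$ flips, for a total of $1 + 2(n-8) = 2n-15$.

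The subtlety you flag --- whether the post-flip triangulation $T'$ is still 4-connected, as required by Lemma~\ref{lem:goodcycle} --- is genuine, and it is worth noting that the paper's own proof does not address it: it simply applies Lemma~\ref{lem:goodcycle} to $T'$ without comment. Your worry is concretely justified: if the flipped edge $(a,b)$ has an endpoint $a$ of degree $4$, its fourth neighbor $w$ (besides $u$, $b$, $v$) is adjacent to both $u$ and $v$, so the flip creates the separating triangle $(u,v,w)$, and the cycle of neighbors of $u$ and $v$ used in the proof of Lemma~\ref{lem:goodcycle} is no longer simple. Your proposed remedy --- refine the counting in Lemma~\ref{lem:6-6flip} to exhibit a flip that avoids any third common neighbor --- is a sensible plan, but you leave it as a sketch, so your proof has an open gap at exactly the point where the paper's argument is also silent. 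To fully close this you would either need to carry out that refined counting, or find a structural reason why in a max-degree-$6$ 4-connected triangulation the two quadrilateral vertices of such a flip cannot both-ways force a third common neighbor of $x$ and $y$.
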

\begin{proof}
By Lemma~\ref{lem:6-6flip}, there is always a pair of vertices $x$ and $y$ of degree 6 that can be connected by a flip. We first perform the flip that connects \mbox{$x$ and $y$}, giving both vertices degree 7. We then proceed similarly to the proof of Theorem~\ref{thm:canonical}. We make $x$ dominant in one of the outerplanar graphs using $n - 8$ flips and we make $y$ dominant in the other, also using $n - 8$ flips. Counting the initial flip, we obtain the canonical triangulation using at most $2n - 15$ flips.
\end{proof}

\noindent By combining this with Theorem~\ref{thm:canonical}, we get the following bound.

\begin{corollary}
 \label{cor:4-contocanon}
 Any 4-connected triangulation $T$ on $n \geq 19$ vertices can be transformed into the canonical triangulation using at most $\min\{2n - 15, 2n - \Delta(T) - 8\}$ flips, where $\Delta(T)$ is the maximum degree among vertices of $T$.
\end{corollary}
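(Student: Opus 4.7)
The statement is a straightforward case analysis that combines the two preceding theorems, so the plan is essentially to verify that each of the two bounds in the $\min$ is achievable under the appropriate circumstances.

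First I would observe that in any triangulation on $n \geq 13$ vertices (and in particular for $n \geq 19$), the average degree is strictly less than $6$, so there must be a vertex of degree at least $6$; hence $\Delta(T) \geq 6$. This reduces the argument to two cases.

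In the first case, suppose $\Delta(T) \geq 7$. Then $2n - \Delta(T) - 8 \leq 2n - 15$, so the minimum is achieved by $2n - \Delta(T) - 8$, and Theorem~\ref{thm:canonical} (which applies since $n \geq 19 \geq 13$) directly gives a transformation using at most that many flips. In the second case, $\Delta(T) = 6$. Then $2n - \Delta(T) - 8 = 2n - 14 > 2n - 15$, so the minimum equals $2n - 15$, and Theorem~\ref{thm:canonicalmaxdeg6} provides a transformation into the canonical triangulation using at most $2n - 15$ flips. In either case the number of flips is at most $\min\{2n - 15,\ 2n - \Delta(T) - 8\}$, which completes the argument.

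There is no real obstacle here beyond bookkeeping; the only subtle point is confirming that $\Delta(T) \leq 5$ cannot occur for the values of $n$ considered, so that the two cases genuinely cover all possibilities. Since the preceding theorems were stated precisely so as to be combined this way, the corollary requires nothing beyond this short case split.
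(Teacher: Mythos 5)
Your case split is exactly what the corollary requires and matches how the paper arrives at it: apply Theorem~\ref{thm:canonical} when $\Delta(T) \geq 7$ (so $2n - \Delta(T) - 8 \leq 2n - 15$) and Theorem~\ref{thm:canonicalmaxdeg6} when $\Delta(T) = 6$, after ruling out $\Delta(T) \leq 5$.

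However, your justification for $\Delta(T) \geq 6$ is a non-sequitur as written: ``the average degree is strictly less than $6$, so there must be a vertex of degree at least $6$'' does not follow---an average below $6$ would, if anything, guarantee a vertex of degree \emph{at most} $5$. The correct argument is the standard one: the degree sum in a triangulation is $6n - 12$, so if every vertex had degree at most $5$ we would need $5n \geq 6n - 12$, i.e.\ $n \leq 12$; hence for $n \geq 13$ some vertex has degree at least $6$. Equivalently, the average degree is $6 - 12/n > 5$ for $n > 12$, which is the inequality that actually forces a vertex of degree $\geq 6$. Once that line is corrected, the proposal is complete and identical in substance to the paper's (implicit) argument.
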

% \begin{proof}
%  This follows from Theorems~\ref{thm:canonical} and \ref{thm:canonicalmaxdeg6}, along with the observation that $2n - 15 < 2n - \Delta(T) - 8$ if $\Delta(T)$ is 6 and $2n - \Delta(T) - 8 \leq 2n - 15$ if $\Delta(T) \geq 7$.
% \end{proof}

\noindent And finally, using our bound from Theorem~\ref{thm:4-connected} on the number of flips it takes to make triangulations 4-connected, we obtain an improved bound on the diameter of the flip graph.

\begin{corollary}
 The diameter of the flip graph of all triangulations on $n \geq 19$ vertices is at most $5.2 n - 33.6$.
\end{corollary}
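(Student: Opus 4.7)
The plan is to derive the bound by a straightforward combination of the two preceding results, exploiting the fact that flips are reversible so the flip graph is undirected and the canonical triangulation can serve as a global hub.

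First I would fix two arbitrary triangulations $T_1$ and $T_2$ on $n \geq 19$ vertices and describe a flip sequence from $T_1$ to $T_2$ that factors through the canonical triangulation $T^{*}$. To go from $T_i$ to $T^{*}$, I apply Theorem~\ref{thm:4-connected} to make $T_i$ 4-connected in at most $\lfloor(3n - 9)/5\rfloor$ flips, and then invoke Corollary~\ref{cor:4-contocanon} to turn the resulting 4-connected triangulation into $T^{*}$ in at most $2n - 15$ further flips. Since every individual flip is reversible (flipping the new diagonal restores the old one), the sequence from $T_2$ to $T^{*}$ can be reversed to give a sequence from $T^{*}$ to $T_2$, and concatenating these two sequences yields a path from $T_1$ to $T_2$ in the flip graph.

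Next I would add up the flip counts:
\[
2\left(\left\lfloor\frac{3n-9}{5}\right\rfloor + 2n - 15\right) \;\leq\; 2\cdot\frac{3n-9}{5} + 2(2n-15) \;=\; \frac{26n - 168}{5} \;=\; 5.2\,n - 33.6,
\]
which is exactly the claimed diameter bound. The hypothesis $n \geq 19$ is inherited from Corollary~\ref{cor:4-contocanon}, so no further case analysis on small $n$ is needed here.

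There is essentially no real obstacle: the bound is tuned so that the arithmetic works out cleanly, and all structural work (the charging argument and the Hamiltonian-cycle decomposition) has already been done in Theorems~\ref{thm:4-connected}, \ref{thm:canonical}, and \ref{thm:canonicalmaxdeg6}. The only mild subtlety worth a line in the write-up is noting that the flip relation is symmetric, so that the $T_2 \to T^{*}$ sequence can legitimately be reversed to produce the $T^{*} \to T_2$ half of the path; beyond that, the corollary is immediate.
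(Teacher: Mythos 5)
Your proposal is correct and takes the same approach as the paper: route both triangulations through the canonical form using Theorem~\ref{thm:4-connected} followed by Corollary~\ref{cor:4-contocanon}, reverse one of the sequences, and sum the bounds. The arithmetic $2\bigl(\lfloor(3n-9)/5\rfloor + 2n - 15\bigr) \leq (26n-168)/5 = 5.2n - 33.6$ is exactly what the paper relies on.
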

% \begin{proof}
%  By Theorem~\ref{thm:4-connected}, any triangulation can be made 4-connected using at most $\lfloor (3n - 9)/5 \rfloor$ flips. By Corollary~\ref{cor:4-contocanon}, we can transform the resulting graph into the canonical triangulation using at most $2n - 15$ flips. Hence, we can transform any triangulation into any other using at most $2 \cdot \big( (3n - 9)/5 + 2n - 15 \big) = 5.2 n - 33.6$ flips.
% \end{proof}

\section{Lower bound}
\label{sec:lb}

% Sierpinsky picture and proof
\noindent In this section we present a lower bound on the number of flips that is required to remove all separating triangles from a triangulation. Specifically, we present a triangulation that has $(3n - 10)/5$ edge-disjoint separating triangles, thereby showing that there are triangulations that require this many flips to make them 4-connected.

The triangulation that gives rise to the lower bound is constructed recursively and resembles the Sierpi\'nski triangle~\cite{sierpinski1915suc}. The construction starts with an empty triangle. The recursive step consists of adding an inverted triangle in the interior and connecting each vertex of the new triangle to the two vertices of the opposing edge of the original triangle. This is recursively applied to the three new triangles that share an edge with the inserted triangle, but not to the inserted triangle itself. After $k$ iterations, instead of applying the recursive step again, we add a single vertex in the interior of each triangle we are recursing on and connect this vertex to each vertex of the triangle. We also add a single vertex in the exterior face so that the original triangle becomes separating. The resulting triangulation is called $\mathcal{T}_k$. Figure~\ref{fig:sierpinsky} illustrates this process for $\mathcal{T}_1$ and $\mathcal{T}_2$.

\begin{figure}[ht]
 \centering
 \includegraphics{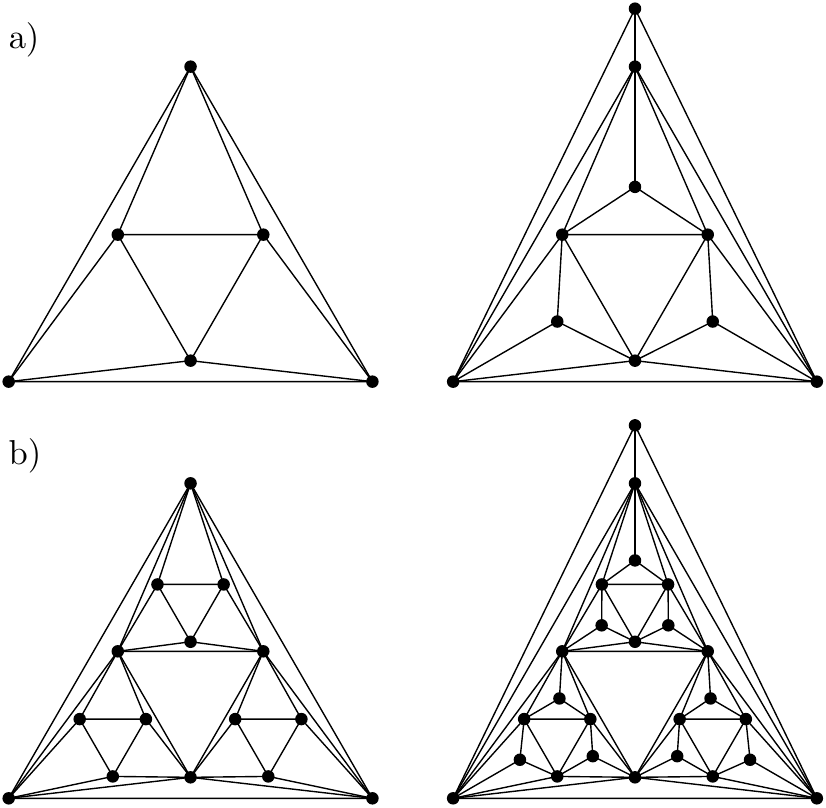}
 \caption{Triangulations $\mathcal{T}_1$ (a) and $\mathcal{T}_2$ (b), before and after the final step of the construction.}
 \label{fig:sierpinsky}
\end{figure}

\begin{theorem}
 There are triangulations that require $(3n - 10)/5$ flips to make them 4-connected, where $n$ is a multiple of 5.
\end{theorem}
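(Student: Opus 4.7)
The plan is to exhibit in $\mathcal{T}_k$ a collection $\mathcal{R}$ of pairwise edge-disjoint separating triangles of size exactly $(3n-10)/5$. Since a single flip removes only one edge, and any separating triangle whose three edges are not touched remains a 3-cycle with the same interior vertex set (the flip is entirely on one side of the triangle and cannot move vertices across it, nor alter its cycle structure, so it stays separating), any one flip destroys at most one triangle from any edge-disjoint family of separating triangles. Making $\mathcal{T}_k$ 4-connected therefore requires at least $|\mathcal{R}|$ flips.

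First I would count vertices by induction on $k$. The initial triangle contributes 3 vertices; at iteration $i$ the $3^{i-1}$ triangles being recursed on each receive 3 new vertices (the inverted triangle), adding $3^i$ vertices; after $k$ iterations the finalization step adds one vertex inside each of the $3^k$ leaf triangles plus one exterior vertex, yielding $n = 3 + \sum_{i=1}^{k}3^i + 3^k + 1 = 5(3^k+1)/2$. I would then take $\mathcal{R}$ to be the outer triangle together with every corner sub-triangle produced at each level $i \in \{1,\dots,k\}$; there are $3^i$ such corners at level $i$, so $|\mathcal{R}| = \sum_{i=0}^{k} 3^i = (3^{k+1}-1)/2$, and substituting the vertex count gives $|\mathcal{R}| = (3n-10)/5$. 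Each member of $\mathcal{R}$ is separating because its interior contains the entire recursion sub-tree below it (culminating in at least one finalization vertex) while its exterior contains at least the outer exterior vertex.

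The main obstacle is verifying edge-disjointness. The three edges of a corner triangle created at level $i \geq 1$ are the two \emph{spokes} joining one parent-triangle vertex to two vertices of the newly-added level-$i$ inverted triangle, together with the single edge of that inverted triangle opposite to this parent vertex; all three edges are introduced during step $i$ and lie strictly inside the parent triangle. Hence corners belonging to different parents are edge-disjoint (their edges occupy disjoint regions), and corners from different recursion levels are edge-disjoint (their edges are introduced at different steps). Within a single subdivision, the three corners partition both the six spokes (by incident parent vertex) and the three inverted-triangle edges (by opposite parent vertex), and are therefore mutually edge-disjoint. The root triangle's three edges are original and are never added by any recursion step, so it is edge-disjoint from every other member of $\mathcal{R}$. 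Combining edge-disjointness with the flip-counting argument from the first paragraph yields the claimed lower bound of $(3n-10)/5$ flips.
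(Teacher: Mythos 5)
Your proposal is correct in substance and follows essentially the same route as the paper's proof: the same recursively defined Sierpi\'nski-like family, the same collection of edge-disjoint separating triangles (the outer triangle plus every corner triangle produced at every level), and the same observation that an edge-disjoint family of $m$ separating triangles forces at least $m$ flips. Your accounting is done with the closed forms $n = 5(3^k+1)/2$ and $|\mathcal{R}| = (3^{k+1}-1)/2$, whereas the paper sets up recurrences for the number of leaves $L_i$, vertices $V_i$, and separating triangles $S_i$ and shows $S_i = (3V_i-10)/5$; these are the same computation in different clothing, and both correctly give the identity $|\mathcal{R}| = (3n-10)/5$. Your first paragraph is actually a bit more careful than the paper's one-line justification that ``each flip removes only the separating triangle that the edge belongs to,'' since you explicitly note that a flip on an edge not belonging to a separating 3-cycle leaves that 3-cycle and its interior vertex set unchanged, and that edge-disjointness of the vertex triples is a combinatorial invariant that persists across flips.

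One small gap remains: the theorem asserts the bound for every multiple of $5$, but the construction $\mathcal{T}_k$ only realizes $n = 5(3^k+1)/2 = 10, 25, 70, \dots$. The paper closes this by observing that, in the final refinement step, one may recurse on only a subset of the current leaf triangles, refining $j$ of the $3^{k-1}$ leaves (each refinement replacing one interior vertex by six, a net gain of $5$ vertices and $3$ edge-disjoint separating triangles); this interpolates over all multiples of $5$ between $V_{k-1}$ and $V_k$ while preserving the identity $|\mathcal{R}| = (3n-10)/5$. You should add a sentence to that effect to cover the full range of $n$ claimed in the statement.
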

\begin{proof}
 In the construction scheme presented above, each of the triangles we recurse on becomes a separating triangle that does not share any edges with the original triangle or the other triangles that we recurse on. Thus all these separating triangles are edge-disjoint. But how many of these triangles do we get? Let $L_i$ be the number of triangles that we recurse on after $i$ iterations of the construction, so $L_0 = 1$, $L_1 = 3$, etc. Now let $V_i$ be the number of vertices of $\mathcal{T}_i$. We can see that $V_1 = 10$ and if we transform $\mathcal{T}_1$ into $\mathcal{T}_2$, we have to remove each of the interior vertices added in the final step and replace them with a configuration of 6 vertices. So to get $\mathcal{T}_2$, we add 5 vertices in each of the $L_1$ triangles. This is true in general, giving
 \begin{equation}
  V_i~~=~~V_{i-1} + 5 L_{i-1}~~=~~10 + 5 \sum_{j=2}^{i} L_{j-1} \label{eq:v}
 \end{equation}
\\ \ \\
 Let $S_i$ be the number of separating triangles of $\mathcal{T}_i$. We can see that $S_1 = 4$ and each recursive refinement of a separating triangle leaves it intact, while adding 3 new ones. Therefore
 \begin{equation}
  S_i~~=~~S_{i-1} + 3 L_{i-1}~~=~~4 + 3 \sum_{j=2}^{i} L_{j-1} \label{eq:s}
 \end{equation}
 From Equation~\eqref{eq:v}, we get that 
 \[ \sum_{j=2}^{i} L_{j-1}~~=~~\frac{V_i - 10}{5} \]
 Substituting this into Equation~\eqref{eq:s} gives 
 \[ S_i~~=~~4 + 3 \cdot \frac{V_i - 10}{5}~~=~~\frac{3 V_i - 10}{5} \]
 Since each flip removes only the separating triangle that the edge belongs to, we need $(3n - 10)/5$ flips to make this triangulation 4-connected. Constructions for multiples of 5 between $V_i$ and $V_{i+1}$ can be obtained by recursing on a subset of the triangles in the final recursion step.
\end{proof}

\noindent Note that this triangulation is not useful for a lower bound on the diameter of the flip graph in general, since it is already Hamiltonian. Thus, even though it is not 4-connected, we know that it can be transformed into the canonical triangulation by at most $2n - 11$ flips from the proof by Mori~\etal~\cite{mori2003diagonal}.

\section{Lemmas and proofs}
\label{sec:lemmas}

\noindent This section contains proofs for the technical lemmas used in the proof of Theorem~\ref{thm:4-connected}.

\begin{lemma}
 \label{lem:interior}
 If a separating triangle $A$ contains a separating triangle $B$, then there is a vertex of $B$ inside $A$ and no vertex of $B$ can lie outside $A$.
\end{lemma}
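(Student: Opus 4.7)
The plan is to prove the two assertions sequentially, starting with the ``no vertex outside'' half and then using it to deduce the ``some vertex inside'' half. Both arguments are by contradiction and rely only on the fact that the interior and exterior of a separating triangle are distinct connected components of $T$ minus the triangle's three vertices, together with the containment hypothesis $V(\text{int}(B)) \subsetneq V(\text{int}(A))$.

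For the claim that no vertex of $B$ lies outside $A$, I would suppose for contradiction that some vertex $v$ of $B$ is in the exterior of $A$. Because $B$ is a separating triangle, its interior is non-empty, and the triangular face of $T$ on the interior side of $B$ incident to an edge of $B$ at $v$ has a third vertex $w$ that lies in the interior of $B$. By hypothesis the interior of $B$ is a subgraph of the interior of $A$, so $w$ is strictly inside $A$. But then $v$ and $w$ are both vertices of $T \setminus V(A)$ (neither is a vertex of $A$, since one is exterior and the other interior to $A$) joined by the edge $(v,w)$ of $T$, which places them in the same component of $T \setminus V(A)$. This contradicts the fact that the interior and exterior of $A$ are distinct components.

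For the claim that some vertex of $B$ lies strictly inside $A$, I would again argue by contradiction: suppose that no vertex of $B$ is in the interior of $A$. Combined with the previous half, every vertex of $B$ is then a vertex of $A$. Since $A$ and $B$ are 3-cycles on the same three vertices of the simple graph $T$, they must use the same three edges, so $A=B$ as subgraphs. But then $\text{int}(A)=\text{int}(B)$, contradicting the requirement in the definition of ``$A$ contains $B$'' that $V(\text{int}(B))$ be strictly smaller than $V(\text{int}(A))$.

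I do not expect a real obstacle here; the only point requiring care is verifying that the chosen neighbour $w$ of $v$ genuinely lies in the interior of $B$, which is guaranteed because $B$ is separating (hence its interior is non-empty and triangulated) and the third vertex of any interior face of $B$ incident to an edge of $B$ necessarily sits in that interior.
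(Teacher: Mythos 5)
Your proof is correct, but it takes a genuinely different route from the paper's. The paper establishes the two claims in the opposite order and with different tools: it first argues that a vertex of $B$ must lie inside $A$, by picking an interior vertex $z$ of $B$ and a vertex $y$ of $A$ not on $B$, invoking 3-connectivity to get a path from $z$ to $y$ that stays inside $A$, and observing that this path must cross $B$; it then deduces the "no exterior vertex" half from the existence of the interior vertex (an edge of $B$ would otherwise join the two sides of $A$). You instead prove "no exterior vertex" first, using a purely local argument: any vertex $v$ of $B$ has, via the interior face of $B$ at $v$, a neighbour $w$ strictly inside $B$, and the edge $(v,w)$ would violate separation of $A$ if $v$ were exterior. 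You then get the "some interior vertex" half essentially for free from the strictness of containment, since otherwise $V(B) \subseteq V(A)$ would force $A=B$. Your version avoids the appeal to 3-connectivity (and the slightly delicate claim that the path stays inside $A$), replacing it with the same face-based observation that underlies the paper's Lemma~\ref{lem:freeedge}; the paper's version has the minor advantage of establishing the interior-vertex claim constructively before needing it, but both are sound and of comparable length.
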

\begin{proof}
 Let $z$ be a vertex in the interior of $B$ and let $y$ be a vertex of $A$ that is not shared with $B$. Since the interior of $B$ is a subgraph of the interior of $A$ and $y$ is not inside $A$, $y$ must be outside $B$. Since every triangulation is 3-connected, there is a path from $z$ to $y$ that stays inside $A$. This path connects the interior of $B$ to the exterior, so there must be a vertex of $B$ on the path and hence inside $A$.
 
 Now suppose that there is another vertex of $B$ outside $A$. Since all vertices of a triangle are connected by an edge, there is an edge between this vertex and the vertex of $B$ inside $A$. This contradicts the fact that $A$ is a separating triangle, so no such vertex can exist.
\end{proof}
\vspace{-0.5em}
\begin{lemma}
 \label{lem:interiorvertex}
 If a vertex $x$ of a separating triangle $B$ is inside a separating triangle $A$, then $A$ contains $B$.
\end{lemma}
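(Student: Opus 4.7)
The plan is to first constrain where vertices of $B$ can lie relative to $A$ using planarity, and then extend this to the whole interior.

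First I would show that no vertex of $B$ lies outside $A$: if some $y \in V(B)$ were outside $A$, the edge $(x,y)$ of $B$ would join a vertex inside $A$ directly to one outside $A$ without using any vertex of $A$, contradicting the fact that $A$'s interior and exterior are distinct components of $T \setminus V(A)$. Therefore $V(B) \subseteq V(A) \cup \mathrm{int}(A)$.

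Next, since the embedding is planar, no edge of $B$ can cross any edge of $A$, so the triangle $B$ (as a closed curve) lies entirely inside the closed interior region of $A$. Because $x \in \mathrm{int}(A)$ lies on $B$, the bounded region enclosed by $B$ --- i.e., its graph-theoretic interior --- is contained in the closed interior region of $A$ (the exterior of $A$ sits in the unbounded region enclosed by $B$). Consequently every vertex of $\mathrm{int}(B)$ lies in $V(A) \cup \mathrm{int}(A)$.

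To conclude, I would rule out that any $a \in V(A) \setminus V(B)$ lies in $\mathrm{int}(B)$. If such an $a$ existed, applying the first step with $a$ in the role of $x$ and $A$, $B$ swapped would give $V(A) \subseteq V(B) \cup \mathrm{int}(B)$. Combined with $V(B) \subseteq V(A) \cup \mathrm{int}(A)$, both triangles would enclose each other's vertices, which in a planar embedding is only possible when $A = B$; but this contradicts $x \in V(B) \cap \mathrm{int}(A)$, since $V(A) \cap \mathrm{int}(A) = \emptyset$. Hence every vertex of $\mathrm{int}(B)$ lies in $\mathrm{int}(A)$, and any edge between two such vertices stays inside $A$, so $\mathrm{int}(B)$ is a subgraph of $\mathrm{int}(A)$. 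Strict inclusion of vertex sets is witnessed by $x$, which is a vertex of $\mathrm{int}(A)$ but not of $\mathrm{int}(B)$.

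The main obstacle will be the third step: cleanly ruling out a vertex of $A$ strictly inside $B$. The symmetry argument sketched above is the cleanest route I see, though one could alternatively invoke the Jordan curve theorem for the cycles $A$ and $B$ together with the non-crossing property of planar edges.
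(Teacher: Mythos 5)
Your proof is correct, but it takes a genuinely different route from the paper's. The paper stays entirely combinatorial: it picks a vertex $y \in V(A) \setminus V(B)$, routes a path from $y$ to the outer face through the exterior of $A$ (showing no vertex of $B$ can lie on that path, hence $y$ is outside $B$), then assumes $A$ does not contain $B$, finds a vertex $z$ inside $B$ but not inside $A$, and tracks a path from $z$ to $x$ through the interior of $B$; this path must meet a vertex of $A$, which then lies inside $B$ and is adjacent to $y$ outside $B$, a contradiction. You instead geometrize: you realize the combinatorial embedding in the plane with the outer face unbounded, observe that no edge of $B$ may cross $A$, and invoke the Jordan curve theorem to conclude that the closed disk bounded by $B$ sits inside the closed disk bounded by $A$. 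The paper's argument is shorter, uses only paths and the definition of ``separating,'' and avoids any appeal to the topology of the embedding; your argument has a cleaner geometric picture but requires the reader to accept the translation from combinatorial embedding to a planar drawing plus Jordan-curve facts. Your step 3 is the one you flag yourself, and indeed the claim that mutual enclosure of vertex sets forces $A=B$ is asserted rather than proved; it can be repaired (mutual inclusion of the closed disks forces equality of the boundaries), but a more direct finish is available once you have $R_B^{\mathrm{in}} \subseteq \overline{R_A^{\mathrm{in}}}$: if some $a \in V(A)$ lay in the open set $R_B^{\mathrm{in}}$, a small neighbourhood of $a$ would be contained in $\overline{R_A^{\mathrm{in}}}$, contradicting that $a$, being on $\partial A$, has arbitrarily close points in $R_A^{\mathrm{out}}$.
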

\begin{proof}
 Let $y$ be a vertex of $A$ that is not shared with $B$. There is a path from $y$ to the outer face that stays in the exterior of $A$. There can be no vertex of $B$ on this path, since this would create an edge between the interior and exterior of $A$. Therefore $y$ is outside $B$.
 
 Now suppose that $A$ does not contain $B$. Then there is a vertex $z$ inside $B$ that is not inside $A$. There must be a path from $z$ to $x$ that stays inside $B$. Since $x$ is inside $A$, there must be a vertex of $A$ on this path. But since $y$ is outside $B$, this would create an edge between the interior and exterior of $B$. Therefore $A$ must contain $B$.
\end{proof}
% \begin{lemma}
%  \label{lem:transcontaining}
%  Given three separating triangles $A$, $B$ and $C$, if $A$ contains $B$ and $B$ contains $C$, then $A$ contains $C$ as well.
% \end{lemma}
% \begin{proof}
%  By Lemma~\ref{lem:containbijection}, there is a vertex of $C$ inside $B$. Since the interior of $B$ is a proper subgraph of the interior of $A$, this vertex is inside $A$ as well, therefore $A$ contains $C$.
% \end{proof}
\vspace{-0.5em}
\begin{lemma}
 \label{lem:onecontainingtriangle}
 A separating triangle can share at most one edge with containing triangles.
\end{lemma}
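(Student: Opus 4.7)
The plan is to argue by contradiction. Suppose a separating triangle $B$ with vertices $u, v, w$ shares two of its edges with containing triangles $A_1$ and $A_2$. Since any two edges of a triangle share exactly one vertex, I may relabel so that $A_1$ shares the edge $(u, v)$ with $B$ and $A_2$ shares the edge $(v, w)$. Let $A_1$ have third vertex $a$ and $A_2$ have third vertex $b$.

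The heart of the argument is to show that each of $A_1, A_2$ must strictly contain the other, which contradicts the strictness built into the definition of containment. For this I would invoke Lemma~\ref{lem:interior} and Lemma~\ref{lem:interiorvertex} in sequence. Since $A_1$ contains $B$, Lemma~\ref{lem:interior} forbids any vertex of $B$ from lying outside $A_1$; in particular $w$ lies inside or on $A_1$. It cannot lie on $A_1$ — if it did, $A_1$ would have vertex set $\{u,v,w\}$, making $A_1 = B$, which is incompatible with containment being strict (strictly smaller vertex set of interiors). Therefore $w$ is strictly inside $A_1$. Symmetrically, $u$ is strictly inside $A_2$.

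Now apply Lemma~\ref{lem:interiorvertex}: $w$ is a vertex of the separating triangle $A_2$ and lies inside the separating triangle $A_1$, so $A_1$ contains $A_2$. Repeating with $u$, which is a vertex of $A_1$ inside $A_2$, gives that $A_2$ contains $A_1$. But "contains" requires a strictly smaller vertex set on the interior side, so we cannot have $A_1$ contain $A_2$ and $A_2$ contain $A_1$ simultaneously. This contradiction establishes the lemma.

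The main subtlety, and essentially the only obstacle, is the small check that the common vertex $w$ (respectively $u$) is \emph{strictly} inside the other containing triangle, rather than merely lying on its boundary. This is why I would explicitly rule out $w = a$ and $u = b$ using the strictness clause in the definition of containment before invoking Lemma~\ref{lem:interiorvertex}. Once those identifications are excluded, the two applications of Lemma~\ref{lem:interiorvertex} are immediate and the symmetric contradiction closes the proof.
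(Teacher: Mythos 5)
Your proof is correct and follows essentially the same approach as the paper's: both use Lemma~\ref{lem:interior} to place the unshared vertex of each shared edge strictly inside the other containing triangle, then apply Lemma~\ref{lem:interiorvertex} twice to derive mutual containment, contradicting the strictness in the definition of containment. The only cosmetic difference is that the paper first separately rules out the two containing triangles being equal and then argues the unshared vertex is inside by process of elimination among the three vertices, whereas you rule out the ``on the boundary'' case directly via strictness; both are valid and equivalent in substance.
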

\begin{proof}
 Suppose we have a separating triangle $D$ that shares two of its edges with separating triangles that contain it. First of all, these triangles cannot be the same, since then they would be forced to share the third edge as well, which means that they are $D$. Since a triangle does not contain itself, this is a contradiction. So call one of these triangles $A$ and call one of the triangles that shares the other edge $B$. Let $x$, $y$ and $z$ be the vertices of $D$, such that $x$ is shared with $A$ and $B$, $y$ is shared only with $A$ and $z$ is shared only with $B$.

 By Lemma~\ref{lem:interior}, $z$ must be inside $A$, while $y$ must be inside $B$, since in both cases the other two vertices of $D$ are shared and therefore not in the interior. But then by Lemma~\ref{lem:interiorvertex}, $A$ contains $B$ and $B$ contains $A$. This is a contradiction, since by transitivity it would imply that the interior of $A$ is a subgraph of itself with a strictly smaller vertex set.
\end{proof}
\vspace{-0.5em}
\begin{lemma}
 \label{lem:onecontaining}
 A separating triangle $D$ that shares no edge with containing triangles can share at most one vertex with containing triangles.
\end{lemma}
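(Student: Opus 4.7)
The plan is to argue by contradiction: suppose $D$ shares two distinct vertices $x$ and $y$ with containing triangles. Let $A$ be a containing triangle sharing the vertex $x$ with $D$, and let $B$ be a containing triangle sharing the vertex $y$ with $D$. I will split into two cases based on whether $A$ and $B$ can be taken to be the same triangle, and derive a contradiction in each case, using the two structural lemmas (Lemma~\ref{lem:interior} and Lemma~\ref{lem:interiorvertex}) already proved earlier in this section.

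First I would dispose of the easy case: if any single containing triangle of $D$ shares two vertices with $D$, then since the three vertices of a triangle are pairwise adjacent, that containing triangle also shares the edge between those two vertices with $D$. This directly contradicts the hypothesis that $D$ shares no edge with containing triangles. Thus I may assume that $A$ shares exactly $x$ with $D$, $B$ shares exactly $y$ with $D$, and in particular $A \neq B$.

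The heart of the argument is then to show that $A$ and $B$ must each contain the other, which would contradict the strict containment condition (just as in the final step of Lemma~\ref{lem:onecontainingtriangle}). Since $A$ contains $D$, Lemma~\ref{lem:interior} gives that no vertex of $D$ lies outside $A$; the vertex $y$ is not a vertex of $A$ (otherwise $A$ would share two vertices with $D$, the case already handled), so $y$ must lie strictly inside $A$. Because $y$ is a vertex of the separating triangle $B$ lying inside $A$, Lemma~\ref{lem:interiorvertex} forces $A$ to contain $B$. By the symmetric argument, swapping the roles of $(A,x)$ and $(B,y)$, I get that $B$ contains $A$. By transitivity of containment, the interior of $A$ would be a subgraph of itself with strictly smaller vertex set, which is impossible.

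The main obstacle is just being careful about the boundary case where a shared vertex could be a vertex of the other containing triangle as well; this is why I first need to handle the ``two shared vertices with a single containing triangle'' situation separately, in order to legitimately conclude that $y$ lies \emph{strictly} in the interior of $A$ and hence invoke Lemma~\ref{lem:interiorvertex}. Once that is in place, the lemma falls out of a short symmetric application of the two preceding lemmas.
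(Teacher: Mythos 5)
Your proof is correct and follows essentially the same route as the paper's: rule out the case where a single containing triangle shares two vertices (which would force a shared edge), then use Lemma~\ref{lem:interior} to place the second shared vertex strictly inside the first containing triangle, invoke Lemma~\ref{lem:interiorvertex} to get mutual containment of $A$ and $B$, and conclude by contradiction. The only cosmetic difference is that you directly cite the ``no vertex of $D$ outside $A$'' clause of Lemma~\ref{lem:interior}, whereas the paper re-derives it via the ``edge between interior and exterior'' observation; the substance is identical.
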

\begin{proof}
 Suppose that $D$ shares two of its vertices with containing triangles. First, both vertices cannot be shared with the same containing triangle, since then the edge between these two vertices would also be shared. Now let $A$ be one of the containing triangles and let $B$ be one of the containing triangles sharing the other vertex. By Lemma~\ref{lem:interior}, there must be a vertex of $D$ inside $A$. So then both vertices of $D$ that are not shared with $A$ must be inside $A$, otherwise there would be an edge between the interior and the exterior of $A$. In particular, the vertex shared by $B$ and $D$ lies inside $A$, which by Lemma~\ref{lem:interiorvertex} means that $A$ contains $B$. But the reverse is also true, so $B$ contains $A$ as well, which is a contradiction.
\end{proof}
\vspace{-0.5em}
\begin{lemma}
 \label{lem:unsharedvertex}
 A separating triangle that shares an edge with a containing triangle cannot share the unshared vertex with another containing triangle.
\end{lemma}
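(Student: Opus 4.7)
The plan is to argue by contradiction in the style of the preceding three lemmas. Let $D$ be a separating triangle with vertices $x$, $y$, $z$, and suppose it shares the edge $(x,y)$ with a containing triangle $A$; the unshared vertex is $z$. Assume for contradiction that $z$ is also shared with another containing triangle $B$.

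First I would locate $z$ relative to $A$. Since $A$ contains $D$, Lemma~\ref{lem:interior} guarantees some vertex of $D$ inside $A$, and the only candidate is $z$ because $x$ and $y$ lie on $A$. So $z$ is inside $A$. Because $z$ is also a vertex of the separating triangle $B$, Lemma~\ref{lem:interiorvertex} then forces $A$ to contain $B$.

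Next I would run the symmetric argument to derive the reverse containment. Since $B$ contains $D$, Lemma~\ref{lem:interior} gives a vertex of $D$ in the interior of $B$; that vertex cannot be $z$ (it lies on $B$), so at least one of $x, y$ must be inside $B$. But $x$ and $y$ are vertices of $A$, so Lemma~\ref{lem:interiorvertex} yields that $B$ contains $A$. Combined with the previous paragraph, this gives $A$ contains $B$ and $B$ contains $A$, which is impossible because containment strictly decreases the vertex set of the interior (and is transitive), exactly the contradiction used at the end of Lemmas~\ref{lem:onecontainingtriangle} and~\ref{lem:onecontaining}.

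I do not expect a real obstacle here; the whole argument is a short application of Lemmas~\ref{lem:interior} and~\ref{lem:interiorvertex} combined in two directions. The only thing to be careful about is making sure the vertex being tracked in each invocation of Lemma~\ref{lem:interiorvertex} is genuinely strictly interior (not merely on the boundary) to the triangle in question, which is why it is important to note explicitly that $x, y$ lie on $A$ and $z$ lies on $B$ before concluding where the remaining vertices must be.
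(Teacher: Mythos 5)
Your proof is correct and follows essentially the same route as the paper: apply Lemma~\ref{lem:interior} to locate a vertex of $D$ inside each containing triangle, feed that into Lemma~\ref{lem:interiorvertex} to deduce both $A$ contains $B$ and $B$ contains $A$, and conclude by the non-reflexivity of containment. The only cosmetic difference is that you derive $A$ contains $B$ first and the paper derives $B$ contains $A$ first; your extra care in noting why the interior vertex cannot be one of the shared boundary vertices is a welcome explicitness but not a different idea.
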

\begin{proof}
 Suppose we have a separating triangle $D = (x, y, z)$ that shares an edge $(x, y)$ with a containing triangle $A$ and the other vertex $z$ with another containing triangle $B$. By Lemma~\ref{lem:interior}, at least one of $x$ and $y$ has to be inside $B$. Since these are vertices of $A$, by Lemma~\ref{lem:interiorvertex}, $B$ contains $A$. Similarly, $z$ has to be inside $A$ and since it is a vertex of $B$, $A$ contains $B$. This is a contradiction.
\end{proof}
\vspace{-0.5em}
\begin{lemma}
 \label{lem:containingshared}
 Given two separating triangles $A$ and $B$ that share an edge $e$, any separating triangle that contains $A$ but not $B$ must use $e$.
\end{lemma}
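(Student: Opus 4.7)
The plan is to show that the two endpoints of $e$ must actually be vertices of the containing triangle $C$, which forces $e$ to be an edge of $C$. So let $C$ be a separating triangle that contains $A$ but not $B$, write $e = (x,y)$, and let $a$ and $b$ denote the third vertices of $A$ and $B$ respectively.

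First I would pin down where $x$ and $y$ can lie with respect to $C$. On one hand, since $C$ contains $A$, Lemma~\ref{lem:interior} tells us that no vertex of $A$ lies outside $C$; in particular $x$ and $y$ are either inside $C$ or on $C$. On the other hand, since $C$ does not contain $B$, the contrapositive of Lemma~\ref{lem:interiorvertex} (applied to $B$ in the role of the inner triangle and $C$ in the role of the outer) says that no vertex of $B$ can be inside $C$. Since $x$ and $y$ are vertices of $B$ as well, they cannot be inside $C$. Combining the two restrictions forces $x$ and $y$ to be vertices of $C$.

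Once $x$ and $y$ are established as two distinct vertices of the triangle $C$, the conclusion is immediate: any two vertices of a triangle are joined by one of its edges, so $e = (x,y)$ is an edge of $C$, as required.

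The only subtle point is checking that Lemma~\ref{lem:interiorvertex} really does apply in the contrapositive form I want, namely: if no vertex of $B$ is shared with $C$ and lies strictly inside $C$, then $C$ does not contain $B$. Since vertices of $B$ that lie on $C$ are not in the interior of $C$ in the sense used by the lemma, there is no ambiguity. Beyond that, the argument is just a careful bookkeeping of inside/outside relationships, with no case analysis needed, so I do not anticipate any real obstacle.
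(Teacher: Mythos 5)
Your proof is correct and uses the same two ingredients as the paper (Lemma~\ref{lem:interior} to rule out ``outside'' and the contrapositive of Lemma~\ref{lem:interiorvertex} to rule out ``inside''); the only difference is that you argue directly that both endpoints of $e$ must lie on $C$, while the paper phrases the same reasoning as a contradiction applied to a single endpoint.
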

\begin{proof}
 Suppose that we have a separating triangle $D$ that contains $A$, but not $B$ and that does not use one of the vertices $v$ of $e$. By Lemma~\ref{lem:interior}, $v$ must be inside $D$. But then $D$ would also contain $B$ by Lemma~\ref{lem:interiorvertex}, as $v$ is a vertex of $B$ as well. Therefore $D$ must share both vertices of $e$ and hence $e$ itself.
\end{proof}
\vspace{-0.5em}
\begin{lemma}
 \label{lem:outertriangle}
 A separating triangle $D$ that uses an edge $e$ of the outer face cannot be contained in a separating triangle that does not share $e$.
\end{lemma}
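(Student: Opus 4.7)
The plan is to argue by contradiction: suppose there is a separating triangle $A$ that contains $D$ but does not share the outer-face edge $e = (u,v)$, and derive that one endpoint of $e$ would simultaneously have to be inside $A$ and outside $A$.

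First I would observe that $A$ cannot share both endpoints of $e$. Indeed, $A$ is itself a triangle, so any two of its vertices are joined by an edge of $A$; hence if both $u$ and $v$ were vertices of $A$, then $e=(u,v)$ would necessarily be an edge of $A$, contradicting our assumption that $A$ does not share $e$. Without loss of generality, then, $v$ is a vertex of $D$ that is not a vertex of $A$.

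Next I would apply Lemma~\ref{lem:interior} to the pair $A \supset D$: since $A$ contains $D$, no vertex of $D$ lies outside $A$. Because $v$ is a vertex of $D$ but not of $A$, this forces $v$ to lie in the interior of $A$. On the other hand, $v$ is an endpoint of the outer-face edge $e$, so $v$ is a vertex of the outer face of $T$. By the definition of exterior, the component obtained by removing $A$ that contains the outer-face vertices not in $A$ is precisely the exterior of $A$; hence $v$ must lie in the exterior of $A$, i.e., outside $A$. Since the interior and exterior of $A$ are disjoint, this is the desired contradiction, and no such $A$ can exist.

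The argument is essentially a two-line consequence of Lemma~\ref{lem:interior} together with the definition of ``exterior'', so I do not expect a serious technical obstacle. The only point that requires a bit of care is ruling out the degenerate possibility that $A$ shares both endpoints of $e$ without sharing $e$ itself; this is handled cleanly by the fact that $A$ is a triangle, so any two of its vertices are already adjacent in $A$.
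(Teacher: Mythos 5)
Your proof is correct and follows essentially the same route as the paper: invoke Lemma~\ref{lem:interior} to force an endpoint of $e$ to lie inside $A$, then contradict this with that endpoint being on the outer face. You spell out more explicitly than the paper the step that $A$ cannot contain both endpoints of $e$ without containing $e$ itself, but the underlying argument is identical.
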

\begin{proof}
 Suppose $D$ is contained in a separating triangle $A$. If $A$ does not share $e$, by Lemma~\ref{lem:interior}, at least one of the vertices of $e$ must be inside $A$. But since $e$ is part of the outer face, this is a contradiction.
\end{proof}

\section{Conclusions and future work}
\label{sec:conclusions}

\noindent We showed that any triangulation on $n$ vertices can be made 4-connected using at most $\lfloor (3n - 9)/5 \rfloor$ flips, while there are triangulations that require $(3n - 10)/5 = \lfloor (3n - 9)/5 \rfloor$ flips when $n$ is a multiple of 5. This shows that our bound is tight for an infinite family of values for $n$, although a slight improvement to $\lfloor (3n - 10)/5 \rfloor$ is still possible. We believe that this is the true bound. We also showed that any 4-connected triangulation on $n \geq 19$ vertices can be transformed into the canonical form using at most $2n - 15$ flips. This matches the lower bound by Komuro~\cite{komuro1997diagonal} in the worst case where the graph has maximum degree 6 and results in a new upper bound of $5.2 n - 33.6$ on the diameter of the flip graph. It also means that both steps of the algorithm, when considered individually, are now tight in the worst case. Therefore, any further improvement must either merge the two steps in some fashion or employ a different technique.

One potential direction for improvement is to see if fewer flips are required to make triangulations Hamiltonian than to make them 4-connected. While 4-connectivity is a sufficient condition for Hamiltonicity, it is not required. Aichholzer~\etal~\cite{oswin08} gave a lower bound of $(n - 8)/3$ on the number of flips required to make a triangulation Hamiltonian.

Further, all of the current algorithms use the same, single, canonical form. Surprisingly, the lower bound of $2n - 15$ flips to the canonical form is also the best known lower bound on the diameter of the flip graph. This suggests that at least one of the two bounds still has significant room for improvement. So is there another canonical form that gives a better upper bound? Or can we get a better bound by using multiple canonical forms and picking the closest?

Another interesting problem is to minimize the number of flips to make a triangulation 4-connected. We showed that our technique is worst-case optimal, but there are cases where fewer flips would suffice. There is a natural formulation of the problem as an instance of 3-hitting set, where the subsets correspond to the edges of separating triangles and we need to pick a minimal set of edges such that we include at least one edge from every separating triangle. This gives a simple 3-approximation algorithm that picks an arbitrary separating triangle and flips all shared edges or an arbitrary edge if there are no shared edges. However, it is not clear whether the problem is even NP-hard, as not all instances of 3-hitting set can be encoded as separating triangles in a triangulation. Therefore it might be possible to compute the optimal sequence in polynomial time.

%% The Appendices part is started with the command \appendix;
%% appendix sections are then done as normal sections
%% \appendix

%% \section{}
%% \label{}

%% References
%%
%% Following citation commands can be used in the body text:
%% Usage of \cite is as follows:
%%   \cite{key}          ==>>  [#]
%%   \cite[chap. 2]{key} ==>>  [#, chap. 2]
%%   \citet{key}         ==>>  Author [#]

%% References with bibTeX database:

%\section*{References}
\bibliographystyle{unsrt}
\bibliography{papers}

%% Authors are advised to submit their bibtex database files. They are
%% requested to list a bibtex style file in the manuscript if they do
%% not want to use model1-num-names.bst.

%% References without bibTeX database:

% \begin{thebibliography}{00}

%% \bibitem must have the following form:
%%   \bibitem{key}...
%%

% \bibitem{}

% \end{thebibliography}

\end{document}